\theoremstyle{definition}
\newtheorem{lemma}{\bf Lemma}
\newtheorem{corollary}{\bf Corollary}
\theoremstyle{remark}
\newtheorem{remark}{\bf Remark}
\acrodef{ofdm}[OFDM]{orthogonal frequency division multiplexing}%
\acrodef{miso-ofdm}[MISO-OFDM]{multi-input single-output orthogonal frequency division multiplexing}%
\acrodef{ris}[RIS]{reconfigurable intelligent surface}%
\acrodef{qos}[QoS]{quality of service}%
\acrodef{idft}[IDFT]{inverse discrete Fourier transform}%
\acrodef{dft}[DFT]{discrete Fourier transform}%
\acrodef{cp}[CP]{cyclic prefix}%
\acrodef{csi}[CSI]{channel state information}%
\acrodef{awgn}[AWGN]{additive white Gaussion noise}%
\acrodef{qcqp}[QCQP]{quadratically constrained quadratic program}%
\acrodef{qp}[QP]{quadratic program}%
\acrodef{bs}[BS]{base station}%
\acrodef{ap}[BS]{base station}%
\acrodef{aps}[APs]{access points}%
\acrodef{qos}[QoS]{quality of service}%
\acrodef{ue}[UE]{user equipment}%
\acrodef{snr}[SNR]{signal-to-noise ratio}%
\acrodef{mmwave}[mmWave]{millimeter-wave}%
\acrodef{snr}[SNR]{signal-to-noise ratio}%
\acrodef{sinr}[SINR]{signal-to-interference-plus-noise ratio}%
\acrodef{ser}[SER]{symbol error rate}%
\acrodef{rc}[RC]{reflection coefficient}%
\acrodef{uavs}[UAVs]{unmanned aerial vehicles}%
\acrodef{mimo}[MIMO]{multiple-input multiple-output}%
\acrodef{noma}[NOMA]{non-orthogonal multiple access}%
\acrodef{ace}[ACE]{adaptive cross-entropy}%
\acrodef{wsr}[WSR]{weighted sum-rate}%
\acrodef{udn}[UDN]{ultra-dense network}%
\acrodef{Udn}[UDN]{Ultra-dense network}%
\def\BibTeX{{\rm B\kern-.05em{\sc i\kern-.025em b}\kern-.08em
    T\kern-.1667em\lower.7ex\hbox{E}\kern-.125emX}}
\begin{document}
\title{The Manifestation of Spatial Wideband Effect in Circular Array: From Beam Split to Beam Defocus} 
\author{\IEEEauthorblockN{Zidong Wu, \emph{Student Member, IEEE}, and Linglong Dai, \emph{Fellow, IEEE}}

\thanks{All authors are with the Department of Electronic Engineering, Tsinghua University as well as Beijing National Research Center for Information Science and Technology (BNRist), Beijing 100084, China (e-mails: wuzd19@mails.tsinghua.edu.cn, daill@tsinghua.edu.cn).}
}

\maketitle
\vspace{-4em}
\begin{abstract}
Millimeter-wave (mmWave) and terahertz (THz) communications with hybrid precoding architectures have been regarded as energy-efficient solutions to fulfill the vision of high-speed transmissions for 6G communications. Benefiting from the advantages of providing a wide scan range and flat array gain, the uniform circular array (UCA) has attracted much attention. However, the growing bandwidth of mmWave and THz communications require frequency-independent phase shifts, which can not be perfectly realized through frequency-independent phase shifters (PSs) in classical hybrid precoding architectures. This mismatch causes the beam defocus effect in UCA wideband communications, where the high-gain beams could not form at non-central frequencies. In this paper, we first investigate the characteristics of the beam defocus effect distinguishing itself from the beam split effect in uniform linear array (ULA) systems. The beam pattern of UCA in both frequency domain and angular domain is analyzed, characterizing the beamforming loss caused by the beam defocus effect. Then, the delay-phase-precoding (DPP) architecture which leverages the true-time-delay (TTD) devices to generate frequency-dependent phase shifts is employed to mitigate the beam defocus effect. Finally, performance analysis and extensive simulation results are provided to evaluate the effectiveness of the DPP architecture in UCA systems.
\end{abstract}
\begin{IEEEkeywords}
Massive MIMO, mmWave communications, uniform circular array (UCA), hybrid precoding, beam defocus.
\end{IEEEkeywords}

\section{Introduction}\label{sec: intro}
\par To meet the ever-increasing growth of data transmission demand, high-frequency bands such as millimeter-wave (mmWave) and terahertz (THz) are promising to provide abundant spectrum resources for future sixth-generation (6G) communications. Nevertheless, high-frequency communications have to face the challenge of severe propagation attenuation, which dramatically limits the coverage area of wireless communications. To combat the high propagation attenuation, massive multiple-input multiple-output (MIMO) plays a key role in forming high-gain directional beams, which significantly extends the communication distances~\cite{Ahmed'18'survey}. 
\par In massive MIMO systems, the energy-efficient hybrid precoding architecture has been widely adopted in existing 5G communications to generate high-gain beams at bearable power consumption. Compared with conventional digital precoding schemes, hybrid precoding avoids employing a large number of power-consuming radio frequency (RF) by dividing a large-dimension digital precoder into a small-dimension digital precoder with RF chains and a large-dimension analog precoder implemented by phase shifters (PSs)~\cite{Rowell'15'mag,zhaoyaqiong'20'cl}. Due to the hardware constraint of analog PSs, additional constraints are imposed in analog precoding designs, enhancing the difficulty in designing hybrid precoding in wideband mmWave and THz communications~\cite{hanchong'jsac'20,Ayach'14'j}.
\subsection{Prior Works}\label{sec:intro prior}
\par Enormous works have investigated hybrid precoding methods in both narrowband and wideband systems. In narrowband mmWave communication systems, the initial work in~\cite{Ayach'14'j} proposed a compressed sensing (CS)-based precoding method, where an orthogonal matching pursuit (OMP) method was employed to determine the analog precoder. To further improve the spectrum efficiency performance, an iterative optimization method based on the Riemannian manifold is proposed in~\cite{Yu'16'j}. Nevertheless, the above-mentioned works assumed a fully-connected hybrid precoding network, which still introduces overwhelming hardware complexity. To solve this problem, the sub-connected architecture was proposed in~\cite{Gao'16'j}, where the successive interference cancellation (SIC) method is employed to derive closed-form digital and analog precoders. Moreover, a dynamic array-of-subarray architecture was further proposed, where the connections between analog precoders and RF chains can be dynamically adjusted to adapt to different channel conditions~\cite{hanchong'jsac'20}. 
\par Although the above methods work well for narrowband communications, the hybrid precoding design experiences serious challenges in wideband mmWave communications~\cite{Molisch'17'mag}. Most existing works in wideband hybrid precoding designs rely on the PS network to adjust signal phase shifts at different antennas to form constructive interference at desired directions. In wideband communications, the required phase shifts to form constructive interference are frequency-dependent. Unfortunately, PSs could only generate the same phase shifts at different frequencies, which is called frequency-independent. This mismatch results in that PSs could only generate desired phase shifts at the central frequency~\cite{hanchong'22'jsac}, which is also termed the spatial wideband effect~\cite{Kim'21'twc}. As a result, at non-central frequencies the beams are squinted from the desired direction with linear arrays, resulting in severe beamforming loss. This effect was termed {\emph{beam squint}} effect in mmWave and became more significant in THz communications, which was identified as the {\emph{beam split}} effect. For illustration simplicity, in this paper we term these effects the beam split effect.
\par Existing works have attempted to overcome the beam split effect from the perspective of algorithm design and architecture design. In the works on algorithm designs, the hybrid precoding algorithm is elaborately designed to mitigate the beam split effect while the hardware architecture remains the same as in classical hybrid precoding. A frequency-selective hybrid precoding algorithm was proposed in~\cite{Alkhateeb'16'tcom}, where the analog precoder was exhaustively searched from a quantized codebook to maximize the achievable rate. To reduce the computational complexity, an efficient alternating optimization method based on channel statistics was proposed in~\cite{Kong'18'twc}. The method employing the channel covariance matrix over the wide bandwidth to efficiently design the analog precoder was proposed in~\cite{Heath'17'twc}. Moreover, the wide beams were also leveraged to combat the beam split effect~\cite{Liu'19'wcl}.
\par While in architecture designs, the true-time-delay devices which could generate frequency-dependent phase shifts were incorporated in the hybrid precoding architecture. Based on this architecture, a delay-phase precoding (DPP) precoding scheme was investigated in~\cite{Tan'22'twc}, where PSs and TTDs are combined to generate frequency-dependent beamforming. In this way, the beam split effect could be mostly eliminated. Moreover, a fixed TTD network was developed in~\cite{hanchong'22'jsac} to avoid the complicated adjustable TTDs.
\par Existing works in wideband hybrid precoding designs mainly focused on the deployment of uniform linear arrays (ULAs). Nevertheless, the effective array aperture of ULA dramatically reduces near the end-fire, resulting in reduced array gain~\cite{Navrati'16'}. Due to the feature of axial symmetry, the uniform circular array (UCA) has been regarded as a feasible solution to provide a wide scan-range and flat array gain with varying angles, which has been utilized to guarantee a better coverage area~\cite{Zhangjing'17'wc}.
\par In UCA communication systems, beamforming techniques have been investigated in narrowband systems~\cite{ma1974book,Feng'99'wpc,Kallnichev'01}. In~\cite{ma1974book} the steering vectors, the radiation pattern, and array directivity were thoroughly investigated. Leveraging the circular symmetry property of UCA, the estimation of the angle of arrival (AoA) has been studied in~\cite{Tewfik'90'c,Tewfik'92'j} in narrowband systems. To enable wideband communications in orthogonal frequency division multiplexing (OFDM) systems with UCA, the frequency invariant beamformer (FIB) was proposed in~\cite{Chan'07'tsp}, which aims to develop a digital beamformer to generate high array gain that is invariant over a wide range of frequencies. This method was further generalized into wideband UCA channel estimation~\cite{Zhang'17'j} and wideband indoor localization~\cite{Gentile'08'icc}.
\par However, the aforementioned works on UCA only consider digital precoding schemes. In high-frequency bands, the fully-digital precoding architecture will bring overwhelming hardware costs and energy consumption. While the hybrid precoding scheme will introduce the spatial wideband effect, i.e. the mismatch of frequency-independent phase shifts of PSs and required frequency-dependent phase shifts to perform efficient beamforming~\cite{Kim'21'twc}. In this paper, we reveal that due to the variation of the array geometry, the spatial wideband effect will have a new manifestation different from the beam split effect, which brings severe beamforming loss in wideband UCA communications. To the best of our knowledge, such an effect has not been investigated in existing research, let alone the solutions to alleviate this effect.

\subsection{Our Contributions}\label{sec:intro contr}
In this paper, we first reveal the manifestation of the spatial wideband effect in UCA systems, the beam defocus effect, and characterize the beamforming loss resulting from the effect. Then, the DPP architecture incorporating TTD units is investigated to alleviate the beamforming loss. The contributions of the paper are listed as follows.
\begin{itemize}
	\item We first reveal that the spatial wideband effect is highly dependent on the array geometry and its manifestation in circular arrays is the beam defocus effect. Different from the beam split effect for ULA where beams are directed to different directions at different frequencies, high-gain beams could no longer form at non-central frequencies. This effect is fundamentally different from the beam split effect, which is termed the {\emph{beam defocus}} effect.
	\item The mechanism of the beam defocus effect is investigated, where the array pattern in both the frequency domain and the angular domain is characterized. By means of the series of Bessel functions, the beamforming gain in the frequency domain could be approximated with the zero-order Bessel function, indicating a severe beamforming loss in wideband hybrid precoding architectures with UCA. In addition, the beam pattern in the angular domain between beam defocus effect and beam split effect is compared, revealing their different manifestations.
	\item To mitigate the beam defocus effect, the DPP architecture is employed in UCA wideband communications. By leveraging the hybrid true-time-delay-phase-shifter (TTD-PS) network, the required frequency-dependent beams could be generated at a relatively low hardware complexity. The precoding design methods are also provided, which are demonstrated to be able to mitigate the beam defocus effect. Simulation results are finally provided to verify the effectiveness of the proposed DPP methods. 
\end{itemize}

\subsection{Organization and Notation}\label{sec:intro org}
\textit{Organization}: The remainder of the paper is organized as follows. Section~\ref{sec: sys} introduces the wideband UCA communication system model. Section~\ref{sec: beam dis} investigates the beam defocus effect in both frequency and angular domains. The DPP architecture is incorporated to alleviate the beam defocus effect in Section~\ref{sec: dpp arc}. The performance analysis is provided in Section~\ref{sec: sys per}. Simulation results are provided in Section~\ref{sec: sim}, and conclusions are drawn in Section~\ref{sec: con}.

\textit{Notations}: Lower-case and upper-case boldface letters represent vectors and matrices, respectively; $\mathbb{C}$ denotes the set of complex numbers; ${[\cdot]^{-1}}$, ${[\cdot]^{T}}$ and ${[\cdot]^{H}}$ denote the inverse, transpose and conjugate-transpose, respectively; $|\cdot|$ and $\| \cdot \|_2$ denote the norm and 2-norm of the argument, respectively; $[{\bf{A}}]_{i,j}$ denotes the $i^{\rm{th}}$ row and $j^{\rm{th}}$ column of matrix ${\bf{A}}$; ${\rm{blkdiag}}({\bf{A}})$ denotes a block diagonal matrix consisting of the columns of ${\bf{A}}$; $\mathcal{CN}(\mu,\sigma^2)$ denotes the complex Gaussian distribution with mean $\mu$ and variance $\sigma^2$.

\section{System Model}\label{sec: sys}
\par We consider a mmWave ELAA wideband communication system, where the classical hybrid precoding scheme is adopted. The base station (BS) is equipped with an $N$-element UCA while the users are assumed to be with an $N_{\rm{r}}$-element ULA. The BS employs $N_{\rm{RF}}$ RF chains, each of which connects to all antennas in a sub-connection manner~\cite{Gao'16'j}. To harvest the spatial multiplexing gain, multiple data streams are transmitted, satisfying $N_{\rm{s}} \leq N_{\rm{RF}} \leq N$. We assume $N_{\rm{s}} = N_{\rm{RF}} \ll N$ in this paper for illustration simplicity. The OFDM model with $M$ subcarriers is employed and the bandwidth is set to $B$. The central frequency is denoted by $f_c$ and the $m^{\rm{th}}$ subcarrier is denoted by $f_m=f_c+\frac{B(2m-1-M)}{2M}$. Thus, the received signal at the user side on the $m^{\rm{th}}$ subcarrier ${\bf{y}}_m \in \mathbb{C}^{N_r \times 1}$ could be expressed as
\begin{equation}
\label{eq: received signal}
\begin{aligned}
{\bf{y}}_m = \rho {\bf{H}}_m^H {\bf{F}}_{\rm{A}} {\bf{F}}_{{\rm{D}},m} {\bf{s}}_m + {\bf{n}}_m,
\end{aligned}
\end{equation}
where ${\bf{H}}_m \in \mathbb{C}^{N \times K}$ denotes the channel, ${\bf{F}}_{\rm{A}} \in \mathbb{C}^{N \times N_{\rm{RF}}}$ and ${\bf{F}}_{{\rm{D}},m} \in \mathbb{C}^{N_{\rm{RF}} \times N_s}$ denote the analog precoder and digital precoder, respectively. Note that since the digital precoder is conducted before digital-to-analog converters (DACs), it could be performed subcarrier by subcarrier and therefore is frequency-dependent. On the contrary, the analog precoder implemented by PSs introduces frequency-invariant phase shifts at different frequencies. Thus, the analog ${\bf{F}}_{\rm{A}}$ is frequency-independent. Due to the circuit restriction, the analog precoder is constrained with $|[{\bf{F}}_{\rm{A}}]_{i,j}| = \frac{1}{\sqrt{N}}$. The transmitted signal ${\bf{s}}_m \in \mathbb{C}^{N_s \times 1}$ and noise ${\bf{n}}_m \in \mathbb{C}^{N_r \times 1}$ follows ${\mathbb{E}}({\bf{s}}_m {\bf{s}}_m^H) = \frac{1}{N_s} {\bf{I}}_{N_s}$ and ${\bf{n}}_m \sim \mathcal{CN}(0,\sigma_n^2)$, respectively. 

\par Adopting the classical Saleh-Valenzuela channel model~\cite{Ayach'14'j}, The wireless channel could be written as
\begin{equation}
\label{eq: channel}
\begin{aligned}
{\bf{H}}_m = \sqrt{\frac{N}{L}} \sum_{l=1}^{L} g_l e^{-j2\pi \tau_l f_m} {\bf{a}}_m(\phi_{l}) {\bar{\bf{a}}}_m(\varphi_{l})^H,
\end{aligned}
\end{equation}
where ${\bf{a}}_m(\cdot)$ and ${\bar{\bf{a}}}_m(\cdot)$ denote the beam steering vectors at the BS side and user side, respectively. Notation $\phi_{l}$ and $\varphi_{l}$ denote the angle of departure and arrival of the $l^{\rm{th}}$ path, respectively. The expression of beam steering vector can be viewed as a frequency response vector for an impinging wave, which is highly dependent on the array structure. In this paper, UCA is employed at the BS shown in Fig.~\ref{img: UCA wideband}. Then, the beam steering vector corresponding to the physical direction $\phi$ can be written as~\cite{Kallnichev'01}
\begin{equation}
\label{eq: beam steering vector}
\begin{aligned}
{\bf{a}}_m(\phi) = \frac{1}{\sqrt{N}} \left[e^{j\eta_m \cos(\phi-\psi_0)}, \cdots, e^{j\eta_m \cos(\phi-\psi_{N-1})} \right]^T,
\end{aligned}
\end{equation}
where $\eta_m = \frac{2 \pi R f_m}{c}$ for $m=1,2,\cdots,M$ and $\psi_n = \frac{2\pi n}{N}$ for $n=0,1,\cdots,N-1$. For the user side, ULA is employed to fulfill the more strict requirements on the array deployment. Thus, the steering vector at the receiver could be expressed as
\begin{equation}
\label{eq: beam steering vector user}
\begin{aligned}
{\bar{\bf{a}}}_m(\varphi) = \frac{1}{\sqrt{N}} \left[1,e^{j\frac{2\pi df_m}{c}\sin\varphi}, \cdots, e^{j\frac{2\pi(N-1)df_m}{c}\sin\varphi} \right]^T.
\end{aligned}
\end{equation}
\begin{figure}[!t]
    \centering
    \setlength{\abovecaptionskip}{0.cm}
    \includegraphics[width=3in]{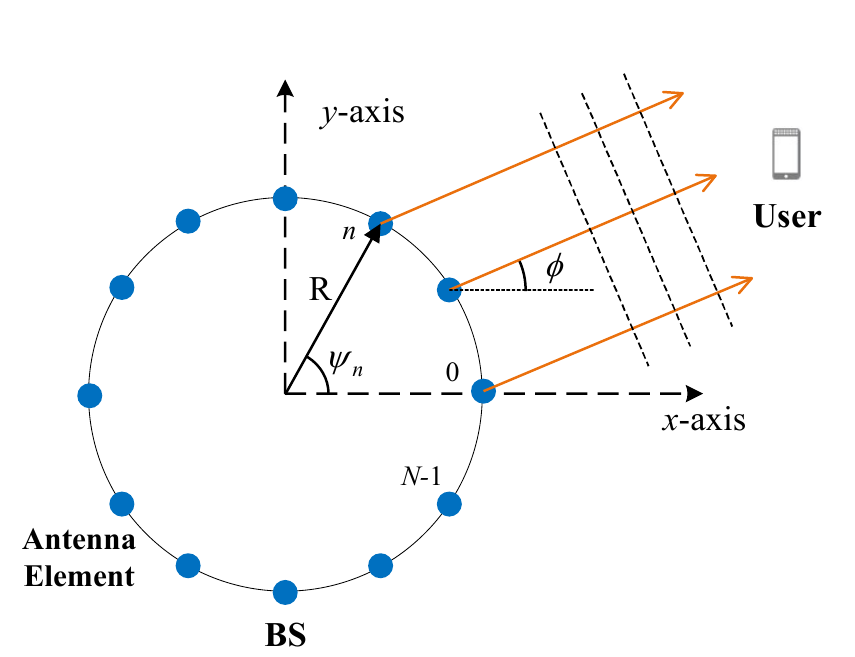}
    \caption{The geometry of UCA at BS.}
    \label{img: UCA wideband}
\end{figure}
\par It is worth noting that the beam steering vectors for constructing the channel are frequency-dependent, which means the phase differences between different antennas change with the frequency. This phenomenon is inconspicuous for narrowband systems, where the uniform phase shifts generated by PSs are feasible to perform ideal analog precoding~\cite{Tan'22'twc}. However, this phenomenon becomes obvious in wideband systems and will result in the beam defocus effect with the classical hybrid precoding architectures, which will be discussed in the following section.

\section{Beam Defocus Effect in UCA Systems}\label{sec: beam dis}
\par In this section, the mechanism of beam defocus for UCA systems will be first introduced. Then, the beamforming loss resulting from the beam defocus effect is characterized in both frequency and angular domains. 
\par In hybrid precoding architectures, the analog precoder and digital precoder are combined to generate directional and less interfered beams to harvest the spatial multiplexing gain and beamforming gain~\cite{Ahmed'18'survey}. Specifically, the analog precoding is aiming to form high-gain beams by performing constructive interference in desired directions. To achieve this goal, the phase shifters (PSs) are elaborately designed to compensate for the phase differences between different antennas to form equal-phase planes. However, it has been revealed that PS could only generate frequency-independent phases. On the contrary, to form directional beams focused on the same direction, the required phase shifts are frequency-dependent, which has been reflected by the term $\eta_m$ in the beam steering vector in equation~\eqref{eq: beam steering vector}. This mismatch of the frequency-independent phase shifts generated by PSs and the required frequency-dependent phase shifts is often termed the spatial wideband effect.
\par Recently, it has been revealed that the spatial wideband effect has given rise to the beam split effect in ULA systems~\cite{Tan'22'twc}, which can be viewed as a manifestation of the spatial wideband effect corresponding to the linear array geometry. Nevertheless, the beam pattern analysis of the beam split effect in~\cite{Tan'22'twc} is only restricted to ULA. As the array geometry varies from linear to circular, the spatial wideband effect is reflected in another manifestation, which shall be discussed as follows. 
\begin{lemma}
\label{lemma1}
Employing frequency-independent phase shifts which are designed according to the central frequency $f_c$, the achieved beamforming gain at the subcarrier $f_m$ at the desired direction $\phi$ could be expressed as  
\begin{equation}
\label{eq: gain wideband}
\begin{aligned}
G_m({\bf{a}}_c(\phi),\phi) &= \left|{\bf{a}}_m^H(\phi){\bf{a}}_c(\phi) \right|\\
 &= \left| \frac{1}{N}\sum_{n=0}^{N-1} e^{-j(\eta_m-\eta_c) \cos(\phi-\psi_n)} \right|\\
 &\approx \left| J_0\left(\frac{2\pi R (f_m-f_c)}{c}\right) \right| = \left| J_0(\eta_m-\eta_c) \right|,
\end{aligned}
\end{equation}
where $J_0(\cdot)$ denotes the zero-order Bessel function of the first kind, $\eta_m = \frac{2\pi R f_m}{c}$ and $\eta_c = \frac{2\pi R f_c}{c}$.
\end{lemma}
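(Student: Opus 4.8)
The plan is to evaluate the inner product in closed form and then expand the resulting exponential sum as a series of Bessel functions, isolating the dominant term. First I would substitute the definitions of $\mathbf{a}_m(\phi)$ and $\mathbf{a}_c(\phi)$ from \eqref{eq: beam steering vector} and form the conjugate inner product. Since both vectors share the same angular offsets $\cos(\phi-\psi_n)$, the per-antenna phase factors combine to $e^{-j(\eta_m-\eta_c)\cos(\phi-\psi_n)}$, reproducing exactly the middle line of \eqref{eq: gain wideband}. Writing $\Delta\eta := \eta_m-\eta_c = \frac{2\pi R(f_m-f_c)}{c}$ keeps the subsequent algebra compact.

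The core step is the Jacobi-Anger expansion $e^{jz\cos\alpha}=\sum_{k=-\infty}^{\infty} j^k J_k(z)\,e^{jk\alpha}$, applied with $z=-\Delta\eta$ and $\alpha=\phi-\psi_n$. Substituting this into the antenna sum and interchanging the (absolutely convergent) $k$-series with the finite $n$-sum, the $n$-dependence appears only through $\frac{1}{N}\sum_{n=0}^{N-1} e^{-jk\psi_n}$ with $\psi_n=\frac{2\pi n}{N}$. This is a geometric sum over the $N$-th roots of unity, equal to $1$ when $k$ is an integer multiple of $N$ and $0$ otherwise. Hence the expression collapses to the aliasing series $\sum_{p=-\infty}^{\infty} j^{pN} J_{pN}(-\Delta\eta)\,e^{jpN\phi}$, whose $p=0$ term is precisely $J_0(-\Delta\eta)=J_0(\Delta\eta)$ because $J_0$ is even. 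Note this leading term carries no $\phi$ dependence, which is the structural signature of beam defocus (a uniform loss) as opposed to a directional squint.

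It then remains to argue that the residual terms $p\neq 0$ are negligible, which I expect to be the main obstacle. These are high-order Bessel functions $J_{\pm N}(\Delta\eta),J_{\pm 2N}(\Delta\eta),\dots$ evaluated at the fixed, moderate argument $\Delta\eta$. In the massive-array, practical-bandwidth regime one has $N\gg|\Delta\eta|$, and the small-argument/large-order asymptotic $J_\nu(z)\sim \frac{1}{\nu!}\left(\frac{z}{2}\right)^{\nu}$ shows these corrections decay factorially in $N$, so they are exponentially suppressed relative to the leading term. Discarding them gives $\mathbf{a}_m^H(\phi)\mathbf{a}_c(\phi)\approx J_0(\Delta\eta)$, and taking magnitudes yields $G_m\approx|J_0(\Delta\eta)|=\left|J_0\!\left(\frac{2\pi R(f_m-f_c)}{c}\right)\right|$, as claimed. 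The delicate point is stating the regime of validity cleanly so that the $\approx$ is quantitatively justified rather than merely asserted; a slightly softer alternative replaces the roots-of-unity identity by the Riemann-sum approximation $\frac{1}{N}\sum_n g(\psi_n)\approx\frac{1}{2\pi}\int_0^{2\pi} g(\psi)\,d\psi$ and then invokes the integral representation $J_0(z)=\frac{1}{2\pi}\int_0^{2\pi} e^{-jz\cos\theta}\,d\theta$ directly, trading the explicit error series for a standard quadrature bound.
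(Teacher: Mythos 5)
Your proposal is correct and follows essentially the same route as the paper's proof: the Jacobi--Anger (generating function) expansion, the roots-of-unity orthogonality collapsing the antenna sum to multiples of $N$, and the large-order decay of Bessel functions (your factorial asymptotic $J_\nu(z)\sim\frac{1}{\nu!}(z/2)^\nu$ plays the same role as the paper's bound $|J_{|s|}(x)|\leq(xe/(2|s|))^{|s|}$) to retain only the $J_0$ term. Your added remark on the Riemann-sum/integral-representation alternative and your explicit note that the surviving aliasing terms carry $e^{jpN\phi}$ are nice touches, but the core argument is identical.
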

\begin{proof}
The proof process starts from the generating function of Bessel functions~\cite{bowman2012introduction}, which is written as
\begin{equation}
\label{eq: lemma1 eq1}
\begin{aligned}
e^{j\beta \cos\gamma} = \sum_{s=-\infty}^{\infty} j^s J_s(\beta) e^{js\gamma},
\end{aligned}
\end{equation}
where $J_s(\cdot)$ denotes the $s$-order Bessel function of the first kind. By substituting~\eqref{eq: lemma1 eq1} into~\eqref{eq: gain wideband}, we can obtain
\begin{equation}
\label{eq: lemma1 eq2}
\begin{aligned}
G_m({\bf{a}}_c(\phi),\phi) \mathop{=}\limits^{(a)} \frac{1}{N} \left| \sum_{s=-\infty}^{\infty} j^s J_s(\eta_c-\eta_m) e^{js\phi} \sum_{n=0}^{N-1} e^{-js\psi_n} \right|,
\end{aligned}
\end{equation}
where equation (a) is derived by exchanging the order of summation. Then, the summation over $n$ could be expressed as the piecewise function as
\begin{equation}
\label{eq: lemma1 eq3}
\sum_{n=0}^{N-1} e^{-js\psi_n}=\left\{
\begin{aligned}
N, \quad &s = N \cdot t, t \in \mathbb{Z} \\
0, \quad &s \neq N \cdot t, t \in \mathbb{Z}.\\
\end{aligned}
\right.
\end{equation}
The piecewise function reveals the periodicity of the summation, which equals zero except on integral multiples of $N$. Then, we assume that $N$ is large enough, which has been a common assumption in the analysis for UCA~\cite{Zhang'17'tawp}. Following the inequation that
\begin{equation}
\label{eq: lemma1 eq4}
\begin{aligned}
\left| J_{|s|}(x) \right| \leq \left( \frac{x e}{2|s|} \right) ^{|s|},
\end{aligned}
\end{equation}
the value of $|J_{|s|}(x)|$ becomes negligible for large $s$. Therefore, $|J_{|s|}(x)| \approx 0$ could be assumed for $s = N \cdot t$ with $t \neq 0$ and large $N$. Finally, the summation could obtain an accurate approximation with the term $J_{0}(\eta_c-\eta_m)$. Finally, due to the symmetry of $J_0(x)$, the beamforming gain could be approximated as
\begin{equation}
\label{eq: lemma1 eq5}
\begin{aligned}
G_m({\bf{a}}_c(\phi),\phi) \approx \left| J_0(\eta_c-\eta_m) \right| = \left| J_0(\eta_m-\eta_c) \right|,
\end{aligned}
\end{equation}
which completes the proof.
\end{proof}

\begin{remark}
{\bf{Lemma}~\ref{lemma1}.} has characterizes the beamforming loss resulting from the frequency-independent phase shifts in the frequency domain. According to the property of $J_0(x)$, the beamforming gain employing PSs could only achieve its maximum when $f_m = f_c$, i.e. at the central frequency. The perfect constructive interference could not form at any subcarrier other than the central frequency, introducing unwilling beamforming loss across the wide bandwidth. In addition, the beamforming gain is not dependent on the angle $\phi$, which is resulted from the rotational symmetry of UCA. This is also one of the distinguishing features of UCAs.
\end{remark}
\par The beamforming gain at different frequencies is plotted in Fig.~\ref{img: beam frequency}. The central frequency is set to $30$ GHz with the bandwidth $B=4$ GHz. The BS is equipped with a $256$-element half-wavelength spaced UCA. It is shown that the optimal beamforming could only be achieved at the central frequency, which is consistent with the analysis in {\bf{Lemma}~1}. According to the overall downtrend of $|J_0(x)|$, a larger frequency deviation from the central frequency will result in a more severe beamforming loss. In addition, the destructive interference could even be formed at certain frequencies, resulting in zero gain.
\begin{figure}[!t]
    \centering
    \setlength{\abovecaptionskip}{0.cm}
    \includegraphics[width=3in]{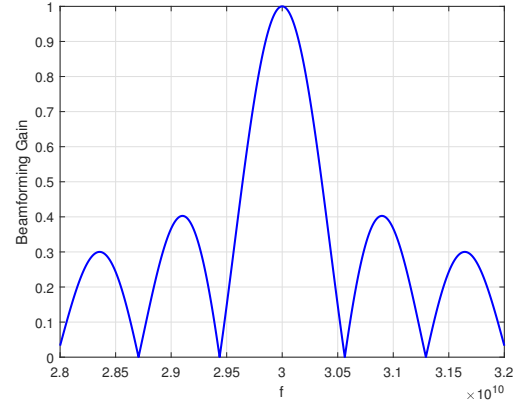}
    \caption{The beamforming gain achieved with frequency-independent phase shifts at different frequencies.}
    \label{img: beam frequency}
\end{figure}
\par The above analysis has indicated that wideband communications with UCA will suffer from severe loss originating from the constraints of frequency-independent PSs in hybrid precoding. It may seem that this phenomenon is very much like the beam split effect for ULA. Next, we shall illustrate the distinction of the beam defocus effect, which significantly distinguishes itself from the beam split effect with the following lemma.

\begin{lemma}
\label{lemma2}
If the frequency-independent beam steering vector ${\bf{a}}_c(\phi_0)$ is employed, the achieved beamforming gain at frequency $f_m$ at any direction $\phi$ could be expressed as  
\begin{equation}
\label{eq: gain wideband angular}
\begin{aligned}
G_m({\bf{a}}_c(\phi_0),\phi) &= \left|{\bf{a}}_m^H(\phi){\bf{a}}_c(\phi_0) \right|\\
 &\approx \left| J_0(\xi) \right|,
\end{aligned}
\end{equation}
where the parameter $\xi$ is defined as
\begin{equation}
\label{eq: xi}
\begin{aligned}
\xi = \sqrt{\eta_m^2+\eta_c^2-2\eta_m\eta_c\cos(\phi-\phi_0)}.
\end{aligned}
\end{equation}
\end{lemma}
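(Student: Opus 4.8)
The plan is to reuse the Bessel-function machinery from the proof of Lemma~\ref{lemma1}, prefaced by one extra trigonometric reduction that accounts for the fact that the reference direction $\phi_0$ and the evaluation direction $\phi$ now differ. First I would write the inner product explicitly from~\eqref{eq: beam steering vector} as
\begin{equation}
\mathbf{a}_m^H(\phi)\mathbf{a}_c(\phi_0) = \frac{1}{N}\sum_{n=0}^{N-1} e^{j[\eta_c\cos(\phi_0-\psi_n)-\eta_m\cos(\phi-\psi_n)]}.
\end{equation}
The decisive step is to fuse the two cosines in the exponent into a single sinusoid in $\psi_n$. Expanding each term with the angle-difference identity and collecting the $\cos\psi_n$ and $\sin\psi_n$ coefficients, I would rewrite the exponent as $A\cos\psi_n+B\sin\psi_n$ with $A=\eta_c\cos\phi_0-\eta_m\cos\phi$ and $B=\eta_c\sin\phi_0-\eta_m\sin\phi$, and then collapse this to $\xi\cos(\psi_n-\theta)$ for a phase offset $\theta$ satisfying $\cos\theta=A/\xi$, $\sin\theta=B/\xi$, where $\xi=\sqrt{A^2+B^2}$. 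Computing $A^2+B^2$ and applying $\cos\phi_0\cos\phi+\sin\phi_0\sin\phi=\cos(\phi-\phi_0)$ reproduces exactly the law-of-cosines expression $\xi=\sqrt{\eta_m^2+\eta_c^2-2\eta_m\eta_c\cos(\phi-\phi_0)}$ claimed in~\eqref{eq: xi}; this is precisely where that form of $\xi$ emerges.

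With this reduction, the inner product takes the form $\frac{1}{N}\sum_{n=0}^{N-1}e^{j\xi\cos(\psi_n-\theta)}$, which is structurally identical to the sum treated in Lemma~\ref{lemma1} with $\eta_c-\eta_m$ replaced by $\xi$ and an extra constant phase $\theta$. I would then apply the generating function~\eqref{eq: lemma1 eq1} with $\beta=\xi$ and $\gamma=\psi_n-\theta$, exchange the order of summation, and invoke the root-of-unity identity~\eqref{eq: lemma1 eq3}. Since $e^{js(\psi_n-\theta)}=e^{-js\theta}e^{js\psi_n}$, the $\theta$-dependence factors out of the sum over $n$ as a constant and does not alter which indices survive, so again only $s=N\cdot t$ with $t\in\mathbb{Z}$ remains, leaving a residual series in $J_{Nt}(\xi)$.

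Finally, I would discard the higher-order contributions exactly as before: under the large-$N$ assumption together with the bound~\eqref{eq: lemma1 eq4}, every term with $t\neq 0$ is negligible, leaving only the $t=0$ term $J_0(\xi)$. Taking the modulus then gives $G_m(\mathbf{a}_c(\phi_0),\phi)\approx|J_0(\xi)|$, as claimed. As a consistency check, setting $\phi=\phi_0$ forces $\xi=|\eta_m-\eta_c|$ and recovers Lemma~\ref{lemma1}. The only genuinely new obstacle relative to Lemma~\ref{lemma1} is the trigonometric fusion in the first step; everything downstream is a verbatim reapplication of the earlier argument, so I expect the main care to go into verifying that fusion and confirming that the spurious phase $\theta$ is harmless to the surviving-index selection.
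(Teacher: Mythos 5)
Your proof is correct, but it follows a genuinely different route from the paper's. The paper expands \emph{both} exponential factors $e^{j\eta_m\cos(\phi-\psi_n)}$ and $e^{-j\eta_c\cos(\phi_0-\psi_n)}$ via the generating function, obtaining a double series in $(s_1,s_2)$; the root-of-unity identity then retains only $s_1+s_2=0$, leaving the series $\sum_{s} J_s(\eta_m)J_s(\eta_c)e^{js(\phi-\phi_0)}$ (after the reflection identities $J_s(-x)=(-1)^sJ_s(x)$ and $J_{-s}(x)=(-1)^sJ_s(x)$), which is finally resummed into $J_0(\xi)$ by invoking Graf's addition theorem for Bessel functions --- that is where the law-of-cosines form of $\xi$ appears in the paper. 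You instead perform the harmonic (phasor) fusion $A\cos\psi_n+B\sin\psi_n=\xi\cos(\psi_n-\theta)$ \emph{before} touching any Bessel machinery, so that the sum becomes literally the Lemma~\ref{lemma1} sum with argument $\xi$ and a harmless angular offset $\theta$, and a single application of the generating function plus the same large-$N$ truncation finishes the job. Your route buys three things: it is more elementary (no addition theorem, no reflection identities); it makes Lemma~\ref{lemma2} a verbatim corollary of the Lemma~\ref{lemma1} computation; and the truncation step is cleaner --- your discarded terms are single Bessel functions $J_{Nt}(\xi)$, directly controlled by the bound~\eqref{eq: lemma1 eq4}, whereas the paper discards, for each $t\neq 0$, an entire series over $\{s_1+s_2=Nt\}$ whose negligibility strictly requires an extra summation argument (the two discarded quantities are in fact equal, which your derivation makes transparent). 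What the paper's route buys is avoidance of the auxiliary phase $\theta$ (which is undefined in the degenerate case $\xi=0$, though the gain is trivially $1$ there) and a derivation organized around the standard addition-theorem identity; in effect, your fusion step inlines the proof of the special case of Graf's theorem that the paper cites. One cosmetic remark: your exponent $\eta_c\cos(\phi_0-\psi_n)-\eta_m\cos(\phi-\psi_n)$ is the complex conjugate of the paper's convention, which is immaterial under the modulus.
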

\begin{proof}
The proof is provided in {\bf Appendix~\ref{app: lemma2}}.
\end{proof}
\par This lemma characterizes the beam pattern of the beam defocus effect in the angular domain. According to the property of $J_0(\cdot)$, the beamforming gain could only achieve the maximum when $\xi = 0$, i.e. $f_m = f_c$ and $\phi = \phi_0$ are simultaneously satisfied. Note that if $\phi = \phi_0$ is satisfied, the result in {\bf{Lemma}~\ref{lemma2}} will degrade into {\bf{Lemma}~\ref{lemma1}}. Therefore, {\bf{Lemma}~\ref{lemma2}} depicts a more general case where $\phi$ can be arbitrarily selected. For fixed $f_m \neq f_c$, beamforming gain at $f_m$ could not reach 1 in any direction. As a consequence, high-gain beams with optimal beamforming gain could not be formed except at the central frequency. This phenomenon is fundamentally different from the beam split effect for ULA systems, where the high-gain beams with optimal beamforming gain split into separated physical directions.

\par To clearly illustrate the differences between the beam defocus and beam split effect, a comparison of the beam pattern in the angular domain is shown in Fig.~\ref{img: beam compare}, where different colored lines represent the beam pattern at different frequencies. Specifically, the high-gain beams slightly squint from the desired direction but retain the same beam pattern in the beam split effect, as shown in Fig.~\ref{img: beam split ill}. On the contrary, the beam pattern is severely distorted at non-central frequencies in the beam defocus effect, as shown in Fig.~\ref{img: beam defocus ill}. The high-gain beams no longer exist in any direction, which is consistent with the result in {\bf{Lemma}~\ref{lemma2}}. Since this effect is similar to the defocus phenomenon in photography, we name it the beam defocus effect. In addition, the colored solid lines in Fig.~\ref{img: beam defocus ill} denote the accurately calculated beamforming gain across the bandwidth while the black dashed lines denote the estimated beamforming gain with {\bf{Lemma}~2}. The consistency of the solid and dashed lines indicates that the estimation in~\eqref{eq: gain wideband angular} could achieve a high degree of accuracy.
 \begin{figure}[!t]
    \centering
    \subfigure[Beam Split Effect]{
    	\includegraphics[width=2.5in]{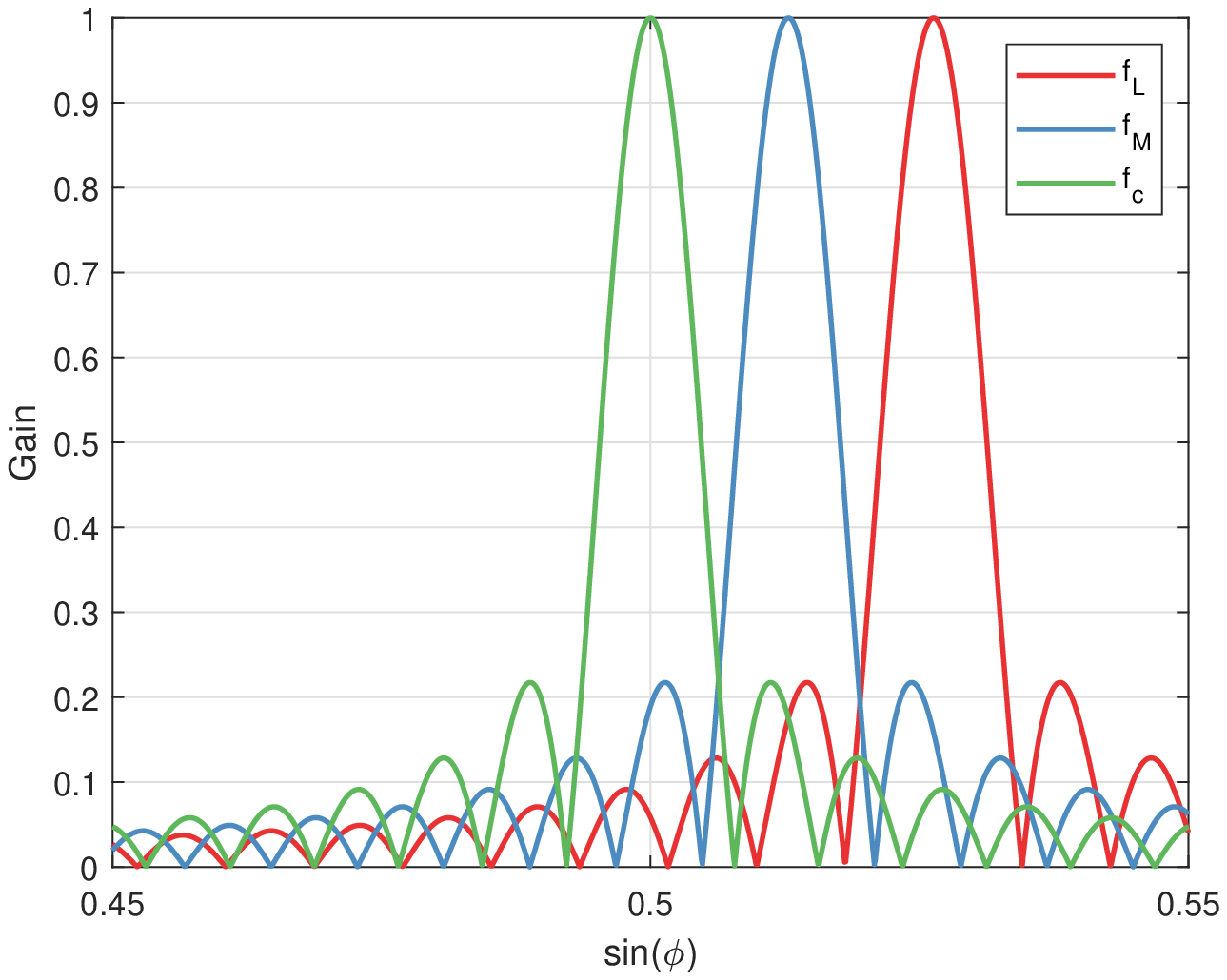}
    	\label{img: beam split ill}}
    \subfigure[Beam Defocus Effect]{
    	\includegraphics[width=2.5in]{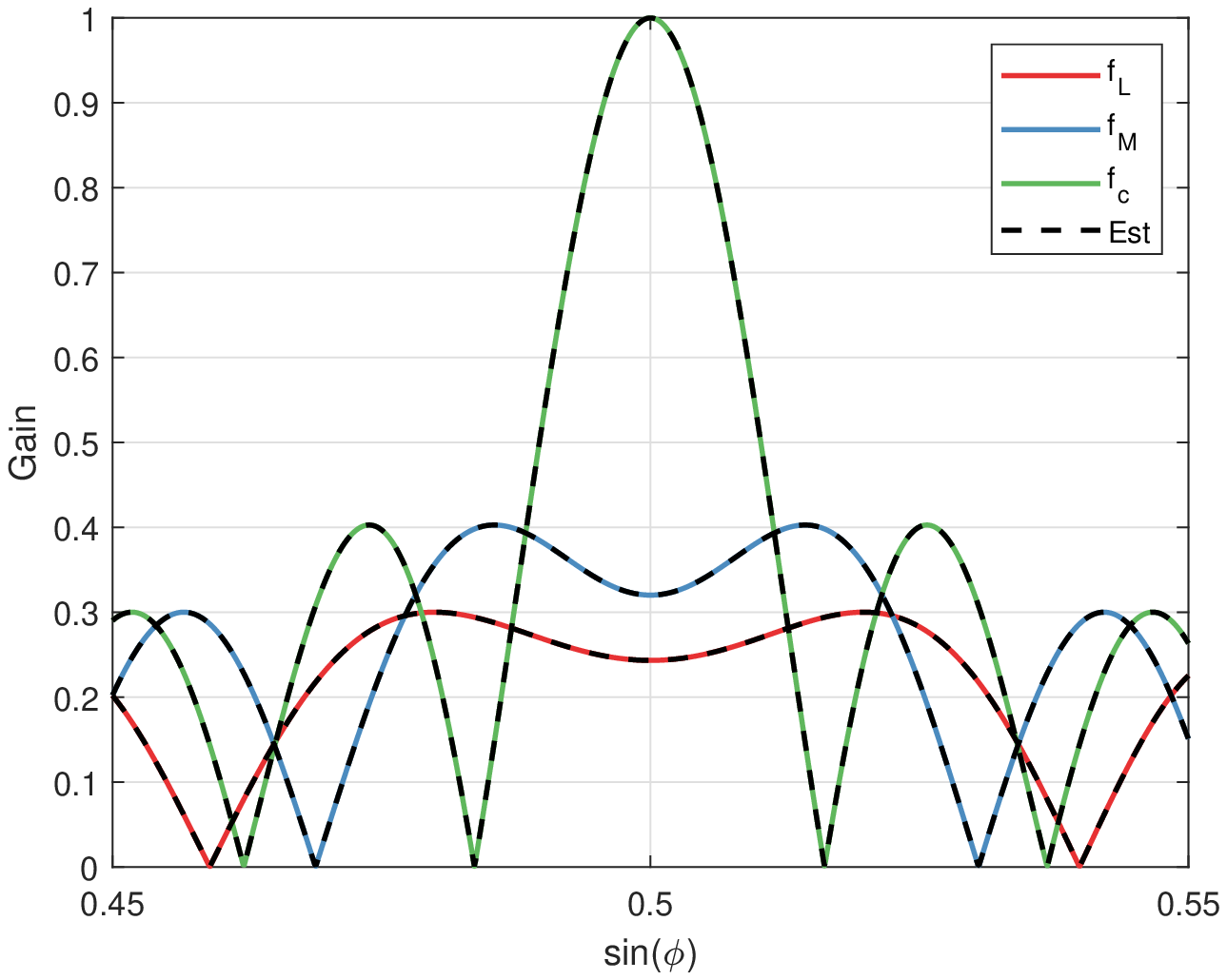}
    	\label{img: beam defocus ill}}
    \caption{Comparison over the beam split effect and beam defocus effect in the angular domain. Both systems are designed to generate beams towards $\sin(\phi) = 0.5$ at the central frequency. $f_L=28.5$ GHz and $f_c=30$ GHz denote the lowest frequency and central frequency while $f_M$ denotes the middle point as $f_M=\frac{f_L+f_c}{2}$.}
    \label{img: beam compare}
    \vspace{-3mm}
\end{figure}
\begin{remark}
A question naturally raises, given that no high-gain beams are generated at non-central frequencies for UCA, where has the transmitted signal power gone in the beam defocus? In fact, the optimal beamforming gain originates from perfect constructive interference, where the phase differences from different antennas are compensated through analog precoding. Due to the mismatch between generated frequency-independent phase shifts and required frequency-dependent phase shifts, no perfect constructive interference will happen in any direction. Therefore, the pencil-like beams no longer exist like in the beam split effect. The transmitted signal power does not change, but spreads into a wide range. Through this phenomenon, we would like to highlight in this paper that the beam pattern is highly dependent on the array geometry, and the beam split effect is not the only manifestation of the spatial wideband effect.
\end{remark}

\section{DPP Architecture to Alleviate the Beam Defocus Effect}\label{sec: dpp arc}
\par To retrieve the beamforming loss resulting from the beam defocus effect, the delay-phase precoding (DPP) architecture is introduced in this section. The precoding algorithm will be provided and the beamforming gain when employing DPP will be investigated.

\subsection{DPP Architecture}\label{sec: dpp}
As discussed in previous sections, severe beamforming loss will be introduced in classical wideband hybrid precoding architectures due to the beam defocus effect, which remarkably worsens the received signal quality across the whole bandwidth. As a result, the wideband UCA communication system will suffer from serious performance loss in PS-based hybrid precoding architectures. 
 \begin{figure}[!t]
    \centering
    \subfigure[Classical Hybrid Precoding Architecture]{
    	\includegraphics[width=3.2in]{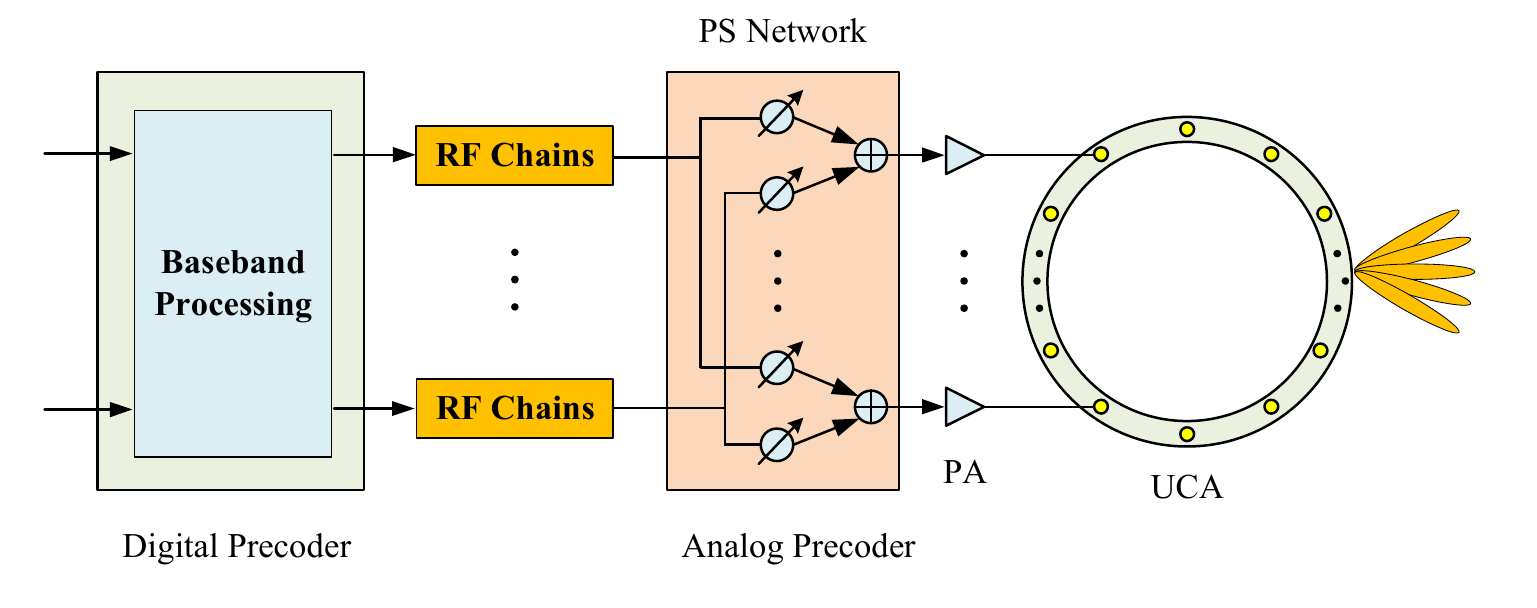}
    	\label{img: hybrid pre}}
    \subfigure[Delay-Phase Precoding Architecture]{
    	\includegraphics[width=3.5in]{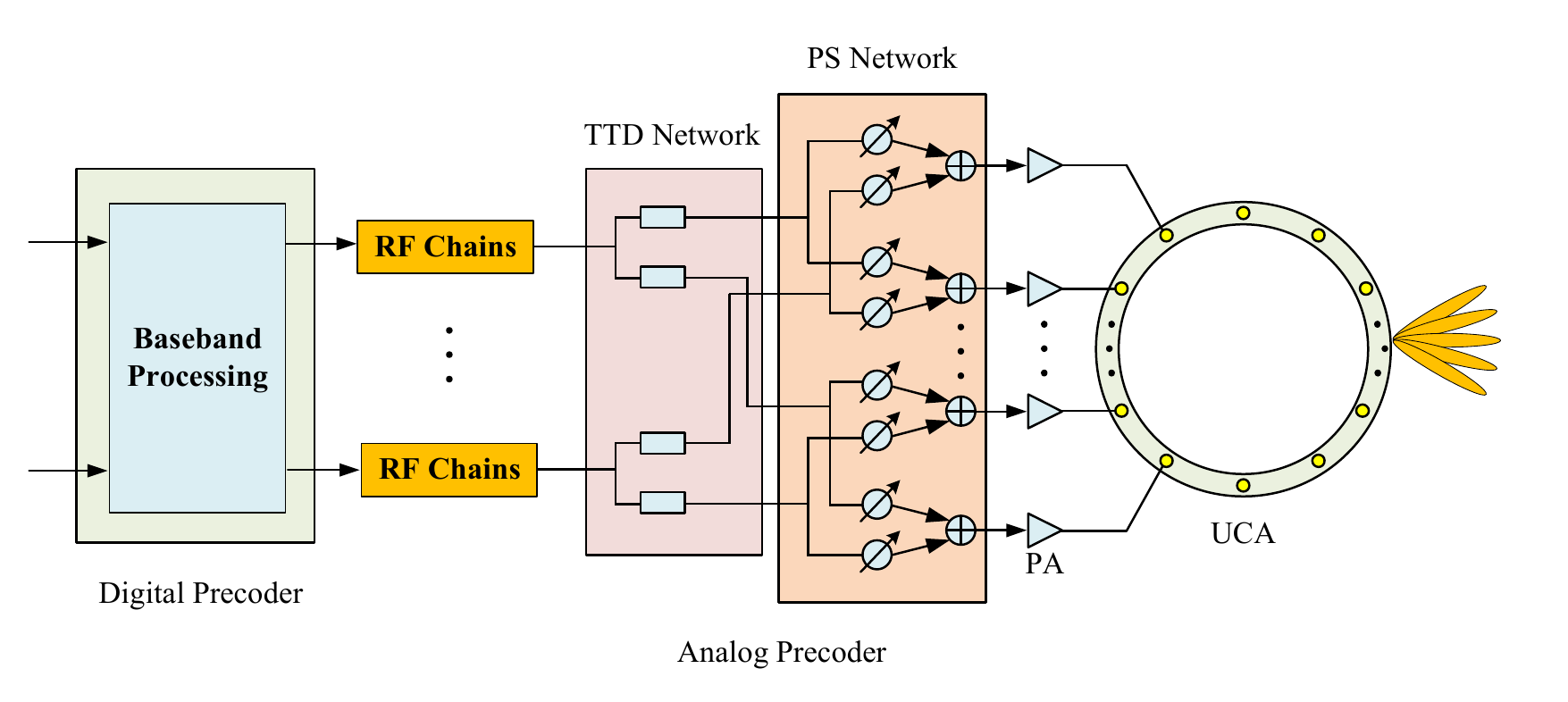}
    	\label{img: DPP arch}}
    \caption{Comparison of the classical hybrid precoding architecture and employed DPP architecture.}
    \label{img: architecture compare}
    \vspace{-3mm}
\end{figure}
\par To break the hardware constraint of PS that could only generate frequency-independent phase shifts, TTD devices have been incorporated before the PS network. In this architecture, the delay of TTD devices and the phase shifts of PSs are combined to produce frequency-dependent phase shifts to cope with the beam split effect in ULA systems~\cite{Tan'22'twc}, which is also termed the DPP architecture. 
\par Noting that despite different manifestations, beam split effect and beam defocus effect share the same root cause, that is the mismatch of the generated frequency-independent phase shifts and required frequency-dependent phase shifts. Inspired by this discovery, we propose to generalize the DPP architecture originally employed in ULA systems into UCA systems to alleviate the beam defocus effect. Different from the classical hybrid precoding architecture as in Fig.~\ref{img: hybrid pre}, the DPP architecture additionally introduces a TTD network to generate frequency-dependent phase shifts, as shown in Fig.~\ref{img: DPP arch}. PSs and TTD units are combined to perform analog precoding. In this paper, we assume that each RF chain connects to $K$ TTD units and each TTD unit connects to $P=\frac{N}{K}$ antennas in a sub-connected manner.
\par Based on the DPP architecture, the TTD-PS analog precoder corresponding to the $l^{\rm{th}}$ component of the channel could be expressed as
\begin{equation}
\label{eq: dpp bf vec}
\begin{aligned}
{\bf{b}}_{m,l} = {\rm{blkdiag}}({\bf{F}}_l){\bf{p}}_{l,m},
\end{aligned}
\end{equation}
where ${\bf{F}}_l = [{\bf{f}}_{l,1},{\bf{f}}_{l,2},\cdots,{\bf{f}}_{l,K}]$ and ${\bf{p}}_{l,m}$ represent the phase shifts generated by the $K$ TTD units. Note that each column of ${\bf{F}}_l$ corresponds to one TTD unit, the element of which has the constant modulus constraint $|[{\bf{f}}_{l,i}]_j|=\frac{1}{\sqrt{N}}$. Roughly speaking, the phase shifts generated by TTD units are linear to the frequency\footnote{It is worth noting that a pure delay of TTD results in phase shifts in proportion to frequency~\cite{Tan'22'twc}. Nevertheless, the frequency-independent term could be realized through PSs, which will be introduced in the next subsection. In this subsection, we relax the constraint of TTD units from in proportion to frequency to linear to frequency for illustration simplicity.}.

\par Then, the analog precoding design could be decomposed into two components, i.e. frequency-independent term ${\bf{f}}_{l,i}$ and frequency-dependent term ${\bf{p}}_{l,m}$. As a consequence, the objective of analog precoding design is to maximize the beamforming gain, which is expressed by
\begin{equation}
\label{eq: dpp bf vec2}
\begin{aligned}
\mathop{\max}\limits_{{\bf{b}}_{m,l}}~ G_m({\bf{b}}_{m,l} ,\phi_l) = \mathop{\max}\limits_{{\bf{b}}_{m,l}}~ \left|{\bf{a}}_m^H(\phi_l){\bf{b}}_{m,l}\right|.
\end{aligned}
\end{equation}
\par We note that PSs could work well in narrowband systems but fail to meet the requirement of frequency-dependent phase shifts in wideband systems. A natural method is first designing PSs to align the beam towards the desired direction at the central frequency, like in narrowband systems~\cite{Tan'22'twc}. Then, the TTDs are designed to compensate for the frequency-dependent residual components, which is critical to mitigating the beam defocus effect in wideband systems. Following this method, ${\bf{F}}_l$ could be designed according to
\begin{equation}
\label{eq: phase shifter item}
\begin{aligned}
{\bf{F}}_l = \left[{\bf{a}}_c(\phi_l)_1,{\bf{a}}_c(\phi_l)_2,\cdots, {\bf{a}}_c(\phi_l)_K \right],
\end{aligned}
\end{equation}
where ${\bf{a}}_c(\phi_l)_i \in {\mathbb{C}}^{P \times 1}$ denotes the $i^{\rm{th}}$ subvector of ${\bf{a}}_c(\phi_l)$. Then, the delay design of TTD units could be obtained through the following lemma.
\begin{lemma}
\label{lemma3}
The optimal designed ${\bf{p}}_{l,m}$ compensating for the frequency-dependent residuals of the $l^{\rm{th}}$ channel component could be expressed as
\begin{equation}
\label{eq: lemma3 eq1}
\begin{aligned}
[{\bf{p}}_{l,m}]_k = \exp\left\{j\frac{2\pi R}{c}(f_m-f_c)\cos\left(\phi_l-\frac{\pi(2k-1)}{K}\right)\right\}, 
\end{aligned}
\end{equation}
where $[{\bf{p}}_{l,m}]_k$ denotes the $k^{\rm{th}}$ element of ${\bf{p}}_{l,m}$. The corresponding beamforming gain at frequency $f_m$ is written as
\begin{equation}
\label{eq: lemma3 eq2}
\begin{aligned}
G_m({\bf{b}}_{m,l} ,\phi_l) \approx \frac{1}{P} \sum_{i=0}^{P-1}J_0(R_i), 
\end{aligned}
\end{equation}
where $R_i = \frac{2\sqrt{2}\pi R}{c}(f_m-f_c)\sqrt{1-\cos\left(\frac{(2i+1)\pi}{N}-\frac{\pi}{K}\right)}$ for $i=0,1,\cdots,P-1$.
\end{lemma}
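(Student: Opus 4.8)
The plan is to substitute both the phase-shifter design in~\eqref{eq: phase shifter item} and the candidate delay design~\eqref{eq: lemma3 eq1} directly into $G_m({\bf{b}}_{m,l},\phi_l)=|{\bf{a}}_m^H(\phi_l){\bf{b}}_{m,l}|$ and then reduce the resulting sum to a collection of Bessel integrals, in the same spirit as the large-$N$ argument of {\bf Lemma}~\ref{lemma1}. Writing the $n$-th antenna as the $i$-th element of the $k$-th sub-array, i.e. $n=(k-1)P+i$ with $k=1,\dots,K$ and $i=0,\dots,P-1$, the block-diagonal structure of ${\rm{blkdiag}}({\bf{F}}_l)$ means that the precoder entry on antenna $n$ is the product of the phase-shifter term $\frac{1}{\sqrt N}e^{j\eta_c\cos(\phi_l-\psi_n)}$ and the single delay phase $[{\bf{p}}_{l,m}]_k$ attached to its sub-array. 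First I would therefore expand $G_m({\bf{b}}_{m,l},\phi_l)=\frac{1}{N}\big|\sum_{k=1}^{K}\sum_{i=0}^{P-1}e^{-j(\eta_m-\eta_c)[\cos(\phi_l-\psi_n)-\cos(\phi_l-\theta_k)]}\big|$, where $\theta_k=\frac{\pi(2k-1)}{K}$ is the steering angle hard-coded into the $k$-th delay in~\eqref{eq: lemma3 eq1}.

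The key structural observation to exploit next is that $\psi_n-\theta_k$ depends only on the intra-block index $i$ and not on the block $k$: since $\psi_{(k-1)P+i}=\frac{2\pi(k-1)}{K}+\frac{2\pi i}{N}$ and $\theta_k=\frac{2\pi(k-1)}{K}+\frac{\pi}{K}$, the offset collapses to $\delta_i=\frac{2\pi i}{N}-\frac{\pi}{K}$. Applying the sum-to-product identity then gives $\cos(\phi_l-\psi_n)-\cos(\phi_l-\theta_k)=2\sin(\delta_i/2)\sin(\phi_l-\theta_k-\delta_i/2)$, so the entire $k$-dependence of the exponent is funneled into the single sinusoid $\sin(\phi_l-\theta_k-\delta_i/2)$, while its amplitude $2(\eta_m-\eta_c)\sin(\delta_i/2)$ depends only on $i$. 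This is precisely what lets the double sum factor into an outer sum over $i$ and an inner average over the $K$ sub-arrays.

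The main obstacle, and the step that mirrors the large-$N$ approximation of {\bf Lemma}~\ref{lemma1}, is evaluating that inner average $\frac{1}{K}\sum_{k=1}^{K}\exp\{-j\,[2(\eta_m-\eta_c)\sin(\delta_i/2)]\sin(\phi_l-\theta_k-\delta_i/2)\}$. I would argue that for large $N$ (hence large $K$) the centers $\theta_k$ sample the circle $[0,2\pi)$ uniformly and densely, so this Riemann sum converges to $\frac{1}{2\pi}\int_0^{2\pi}e^{-jx\sin(\phi_l-\theta-\delta_i/2)}\,d\theta$ with $x=2(\eta_m-\eta_c)\sin(\delta_i/2)$; a change of integration variable absorbs the constant shift $\phi_l-\delta_i/2$, and the standard integral representation $J_0(x)=\frac{1}{2\pi}\int_0^{2\pi}e^{-jx\sin\theta}\,d\theta$ identifies the limit as $J_0(x)$. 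Rigour here rests entirely on the same ``$N$ large'' hypothesis invoked earlier, and controlling the discretization error of this sum is the part I expect to require the most care.

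Finally I would collect the pieces. The inner average yields $J_0\big(2(\eta_m-\eta_c)\sin(\delta_i/2)\big)$ independently of $k$, so the sum over the $K$ blocks contributes a factor $K$ that cancels against the $\frac{1}{N}=\frac{1}{KP}$ prefactor, leaving $G_m({\bf{b}}_{m,l},\phi_l)\approx\frac{1}{P}\sum_{i=0}^{P-1}J_0\big(2(\eta_m-\eta_c)\sin(\delta_i/2)\big)$. Rewriting the amplitude through the half-angle identity $2|\sin(\delta_i/2)|=\sqrt{2}\sqrt{1-\cos\delta_i}$ and substituting $\eta_m-\eta_c=\frac{2\pi R}{c}(f_m-f_c)$ turns the argument into $R_i=\frac{2\sqrt 2\pi R}{c}(f_m-f_c)\sqrt{1-\cos\delta_i}$, which matches~\eqref{eq: lemma3 eq2} up to the $O(1/N)$ centering convention that distinguishes $\delta_i$ from $\frac{(2i+1)\pi}{N}-\frac{\pi}{K}$, completing the argument.
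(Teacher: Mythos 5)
Your proposal is correct, and it reaches \eqref{eq: lemma3 eq2} by a genuinely different route from the paper. The paper's Appendix~\ref{app: lemma3} applies the Jacobi--Anger generating function twice (once for the intra-subarray sum, once for the sum over TTD centers), exchanges summations, discards all aliasing terms with $s_1+s_2\neq 0$ under a large-$K$ assumption, and then resums the surviving diagonal series $\sum_s J_s(\cdot)J_s(\cdot)e^{js\theta}$ through the addition theorem of Bessel functions to arrive at $J_0(R_i)$. You replace both special-function manipulations with elementary ones: the sum-to-product identity concentrates the entire $k$-dependence into the single sinusoid $\sin(\phi_l-\theta_k-\delta_i/2)$, and the average over the $K$ equally spaced centers becomes a Riemann sum over a full period, which the representation $J_0(x)=\frac{1}{2\pi}\int_0^{2\pi}e^{-jx\sin\theta}\,{\rm d}\theta$ evaluates directly. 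The two large-$K$ steps are equivalent in rigor, not just in spirit: by Jacobi--Anger, the error of your Riemann sum is exactly the sum of aliased Fourier coefficients $J_{tK}\bigl(2(\eta_m-\eta_c)\sin(\delta_i/2)\bigr)$ with $t\neq 0$, which is precisely what the paper suppresses via the decay bound \eqref{eq: lemma1 eq4} in Lemma~\ref{lemma1}; so the step you flag as needing the most care is controlled by the same tool already used there. Your $O(1/N)$ discrepancy is also not a defect on your side: the paper's proof optimizes the subarray offset ($\xi=\pi-\frac{\pi}{P}$) and thereby ends up with delay centers $\frac{\pi(2k-1)}{K}-\frac{\pi}{N}$ (the angular midpoints of the subarrays), which differ by $\frac{\pi}{N}$ from the design actually stated in \eqref{eq: lemma3 eq1}; your calculation is the one that matches the stated design exactly, and the two gain formulas agree up to this internal inconsistency of the paper. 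The one piece of the lemma you leave untouched is the claimed \emph{optimality} of ${\bf{p}}_{l,m}$: the paper addresses it (loosely) by choosing the offset that makes the $R_i$ as small as possible, which in your parametrization is exactly the symmetric, midpoint choice of the $\delta_i$; adding one sentence to that effect would make your argument cover everything the paper's proof does.
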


\begin{proof}
The proof is provided in {\bf Appendix~\ref{app: lemma3}}.
\end{proof}
\par Note that a large $K$ is assumed in the proof of {\bf{Lemma}~3} to acquire a relatively accurate approximation. Through this lemma, we can see that the beamforming gain after introducing the TTD network is highly dependent on the choice of $K$. Since it is hard to directly extract the relationship between the beamforming gain and $K$, we seek a more succinct expression of beamforming gain with the following corollary.
\begin{corollary}
\label{coro1}
With the assumption of a large $K$, the beamforming gain with DPP architectures obtained in~\eqref{eq: lemma3 eq2} could be further simplified as
\begin{equation}
\label{eq: coro1 eq1}
\begin{aligned}
G_m({\bf{b}}_{m,l} ,\phi_l) \approx \left|_1F_2\left(\frac{1}{2};1,\frac{3}{2};-\frac{a^2}{4}\right) \right|,
\end{aligned}
\end{equation}
where the notation $ _1F_2$ represents the generalized hypergeometric function~\cite{mathai2006generalized} and $a = \frac{2\pi^2 R}{cK}(f_m-f_c)$.
\end{corollary}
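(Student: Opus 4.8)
The plan is to start from the expression \eqref{eq: lemma3 eq2} in \textbf{Lemma}~\ref{lemma3}, namely $G_m({\bf b}_{m,l},\phi_l) \approx \frac{1}{P}\sum_{i=0}^{P-1} J_0(R_i)$, simplify each argument $R_i$ under the large-$K$ assumption, recast the resulting average as a Riemann sum that converges to a definite integral, and finally identify that integral with the generalized hypergeometric function by expanding $J_0$ in its power series and matching coefficients.

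First I would rewrite the cosine argument inside $R_i$. Since $N = PK$, the angle $\frac{(2i+1)\pi}{N} - \frac{\pi}{K}$ equals $\theta_i := \frac{\pi}{K}\cdot\frac{2i+1-P}{P}$, whose magnitude is bounded by $\frac{\pi}{K}$ because $|2i+1-P| < P$. For large $K$ this angle is uniformly small, so I would apply $1-\cos\theta_i \approx \frac{\theta_i^2}{2}$, giving $\sqrt{1-\cos\theta_i} \approx \frac{|\theta_i|}{\sqrt 2}$. Substituting back cancels the $\sqrt 2$ factors and yields $R_i \approx \frac{2\pi R}{c}(f_m-f_c)|\theta_i| = a\,\frac{|2i+1-P|}{P}$, with $a = \frac{2\pi^2 R}{cK}(f_m-f_c)$ exactly as defined in the statement. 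Writing $x_i := \frac{2i+1-P}{P}$, the points $\{x_i\}_{i=0}^{P-1}$ are precisely the midpoints of a uniform partition of $[-1,1]$ into $P$ cells of width $\frac{2}{P}$.

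Next I would interpret the average as a midpoint Riemann sum: $\frac{1}{P}\sum_{i=0}^{P-1} J_0(a|x_i|) = \frac{1}{2}\sum_{i=0}^{P-1} J_0(a|x_i|)\,\frac{2}{P} \to \frac{1}{2}\int_{-1}^{1} J_0(a|x|)\,dx = \int_0^1 J_0(ax)\,dx$, where the last equality uses that the integrand is even. To evaluate the integral I would insert the series $J_0(z) = \sum_{k\ge 0}\frac{(-1)^k}{(k!)^2}(z/2)^{2k}$ and integrate term by term, obtaining $\sum_{k\ge 0}\frac{(-1)^k}{(k!)^2(2k+1)}(a/2)^{2k}$. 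Matching this against ${}_1F_2\!\left(\frac12;1,\frac32;-\frac{a^2}{4}\right) = \sum_{k\ge0}\frac{(1/2)_k}{(1)_k(3/2)_k}\frac{(-a^2/4)^k}{k!}$ is then purely algebraic: using $(1)_k = k!$ and the telescoping identity $(3/2)_k = (2k+1)(1/2)_k$, the $k$-th hypergeometric term collapses to exactly $\frac{(-1)^k}{(k!)^2(2k+1)}(a/2)^{2k}$, so the two series coincide. Taking the modulus completes the identification \eqref{eq: coro1 eq1}.

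The main obstacle is justifying the two approximations that underlie the derivation and reconciling them with the stated regime. The small-angle step is controlled by large $K$, but the passage from the finite average to $\int_0^1 J_0(ax)\,dx$ is a Riemann-sum limit that additionally requires $P = N/K$ to be large; both hold simultaneously only in the massive-array regime where $N$ is very large. I would therefore make explicit that the approximation is asymptotic in $N$ (with $K$ large while $P$ remains large), and track that the error from $1-\cos\theta_i \approx \theta_i^2/2$ is $O(K^{-2})$ uniformly in $i$ while the midpoint-rule error is $O(P^{-2})$, so that both vanish in the intended limit and the identification with ${}_1F_2$ is asymptotically exact.
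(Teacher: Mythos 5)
Your proposal is correct and follows essentially the same route as the paper's Appendix~\ref{app: coro1}: small-angle Taylor expansion of $\sqrt{1-\cos(\cdot)}$ to linearize $R_i$, conversion of the average $\frac{1}{P}\sum_i J_0(R_i)$ into the integral $\frac{1}{a}\int_0^a J_0(x)\,{\rm d}x$, and identification with $_1F_2\left(\frac{1}{2};1,\frac{3}{2};-\frac{a^2}{4}\right)$ by term-by-term integration of the $J_0$ series and Pochhammer-symbol matching. Your only addition is making explicit that the sum-to-integral step is a midpoint Riemann sum requiring large $P=N/K$ (with $O(P^{-2})$ error) alongside the $O(K^{-2})$ small-angle error, a point the paper leaves implicit under its large-$N$ assumption.
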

\begin{proof}
The proof is provided in {\bf Appendix~\ref{app: coro1}}.
\end{proof}
Although the special function $_1F_2(\cdot)$ in~\eqref{eq: coro1 eq1} seems complicated, it is still a one-variable function over $a$, where an example of $_1F_2(\frac{1}{2};1,\frac{3}{2};-\frac{x^2}{4})$ against $x$ is plotted as the red line in Fig.~\ref{img: hyper function}. According to the overall downtrend of $_1F_2(\frac{1}{2};1,\frac{3}{2};-\frac{x^2}{4})$, we can sketchily conclude that a larger number of TTDs is required if we aim to further mitigate the beamforming loss. Then, a critical problem is to estimate the number of TTD units required if we want to ensure a beamforming loss less than a predetermined threshold across the whole bandwidth, which will be discussed in the following corollary.
\begin{figure}[!t]
    \centering
    \setlength{\abovecaptionskip}{0.cm}
    \includegraphics[width=3in]{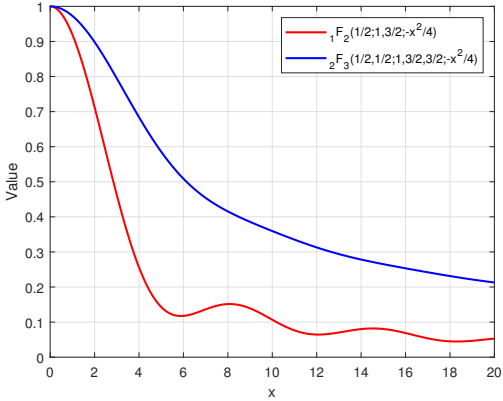}
    \caption{Illustration of the generalized hypergeometric functions $_1F_2(\frac{1}{2};1,\frac{3}{2};-\frac{x^2}{4})$ and $_2F_3(\frac{1}{2},\frac{1}{2};1,\frac{3}{2},\frac{3}{2};-\frac{x^2}{4})$ against $x$.}
    \label{img: hyper function}
\end{figure}

\begin{corollary}
\label{coro2}
Aiming to achieve a beamforming loss less than a threshold $\Delta$ in the whole bandwidth, the required number of TTD units has to satisfy
\begin{equation}
\label{eq: coro2 eq1}
\begin{aligned}
K \geq \frac{\pi^2 RB}{cF^{-1}(1-\Delta)},
\end{aligned}
\end{equation}
where the function $F^{-1}(1-\Delta)$ is defined by $F^{-1}(1-\Delta) \triangleq {\underset {x} { \operatorname {arg\,min} } \, \left\{ _1F_2(\frac{1}{2};1,\frac{3}{2};-\frac{x^2}{4}) = 1-\Delta \right\}}$.
\end{corollary}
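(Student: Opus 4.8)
The plan is to reduce the uniform-over-bandwidth requirement to a single worst-case constraint at the band edge, and then invert the monotone relation between beamforming gain and the argument $a$ supplied by \textbf{Corollary~\ref{coro1}}. The beamforming loss on subcarrier $f_m$ is $1 - G_m({\bf{b}}_{m,l},\phi_l)$, so the design target ``loss less than $\Delta$ across the whole bandwidth'' is equivalent to requiring $G_m({\bf{b}}_{m,l},\phi_l) \geq 1 - \Delta$ simultaneously for every subcarrier index $m$.

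First I would substitute the expression from \textbf{Corollary~\ref{coro1}}, namely $G_m \approx {}_1F_2(\frac{1}{2};1,\frac{3}{2};-\frac{a^2}{4})$ with $a = \frac{2\pi^2 R}{cK}(f_m - f_c)$, and observe that the only subcarrier-dependent quantity is $|a|$, which grows monotonically with the frequency deviation $|f_m - f_c|$. Consequently, the tightest constraint occurs at the subcarrier with the largest deviation. Using $f_m = f_c + \frac{B(2m-1-M)}{2M}$, the maximal deviation is $|f_m - f_c| \leq \frac{B(M-1)}{2M} \approx \frac{B}{2}$, so the worst-case argument is $a_{\max} = \frac{\pi^2 R B}{cK}$. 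It then suffices to guarantee $G \geq 1 - \Delta$ at $a = a_{\max}$, since all other subcarriers yield smaller $|a|$ and hence larger gain.

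Next I would invoke the monotone-decreasing behavior of $g(x) \triangleq {}_1F_2(\frac{1}{2};1,\frac{3}{2};-\frac{x^2}{4})$ on the relevant initial interval, which is exactly the downtrend illustrated in Fig.~\ref{img: hyper function}. Since $g(0)=1$ and $g$ decreases from its peak, the inequality $g(a_{\max}) \geq 1 - \Delta$ is equivalent to $a_{\max} \leq F^{-1}(1-\Delta)$, where $F^{-1}(1-\Delta)$ is precisely the first abscissa at which $g$ attains the value $1-\Delta$, matching the arg-min definition in the statement. Substituting $a_{\max} = \frac{\pi^2 R B}{cK}$ and rearranging yields $K \geq \frac{\pi^2 R B}{cF^{-1}(1-\Delta)}$, as claimed.

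The main obstacle is justifying that $g$ is monotonically decreasing on $[0, F^{-1}(1-\Delta)]$, so that the band-edge subcarrier is genuinely the worst case and the level set $\{g = 1-\Delta\}$ is crossed only once on this interval. Like the Bessel functions from which it is assembled, $g$ eventually oscillates, so the arg-min in the definition of $F^{-1}$ is essential: it selects the first crossing, guaranteeing $g(x) \geq 1 - \Delta$ for all $x \in [0, F^{-1}(1-\Delta)]$. Establishing this initial monotonicity rigorously (e.g.\ via the series expansion of ${}_1F_2$ or the integral representation underlying \textbf{Corollary~\ref{coro1}}) is the delicate point; once it is in hand, the reduction to the band edge and the final algebraic inversion are immediate.
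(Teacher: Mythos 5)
Your proof is correct and takes essentially the same route as the paper's, which simply notes that $|f_m-f_c|\leq B/2$ holds for every subcarrier and substitutes the definitions of $a$ and $F^{-1}(\cdot)$ into the gain expression of {\bf{Corollary}~\ref{coro1}}; your write-up just makes the band-edge reduction and the inversion explicit. One remark: the initial monotonicity of ${}_1F_2(\frac{1}{2};1,\frac{3}{2};-\frac{x^2}{4})$ that you flag as the delicate point is not actually required, because $g(0)=1>1-\Delta$ and $g$ is continuous, so the first-crossing (arg-min) definition of $F^{-1}(1-\Delta)$ already forces $g(x)\geq 1-\Delta$ for all $x\in[0,F^{-1}(1-\Delta)]$ by the intermediate value theorem.
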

\begin{proof}
Note that $|f_c-f_m|\leq B/2$ always holds for any subcarrier $f_m$. By substituting the definitions of $a$ and $F^{-1}(\cdot)$ into~\eqref{eq: coro1 eq1}, the results in~\eqref{eq: coro2 eq1} can be easily obtained. 
\end{proof}

\subsection{DPP Algorithm for UCA Systems}\label{sec: dpp precoding}
\par In the previous subsection, the beamforming gain with TTD-PS-based analog precoding has been analyzed. In this subsection, the complete precoding algorithm consisting of the digital precoder and analog precoder implemented by TTD units will be provided.
\par Based on the TTD-PS architecture, the received signal can be rewritten as
\begin{equation}
\label{eq: received signal TTD}
\begin{aligned}
{\bf{y}}_m = \sqrt{\rho} {\bf{H}}_m^H {\bf{F}}_{\rm{A}}^{\rm{PS}} {\bf{F}}_{{\rm{A}},m}^{\rm{TTD}} {\bf{F}}_{{\rm{D}},m} {\bf{s}}_m + {\bf{n}}_m,
\end{aligned}
\end{equation}
where ${\bf{F}}_{\rm{A}}^{\rm{PS}}$ and ${\bf{F}}_{{\rm{A}},m}^{\rm{TTD}}$ denote the analog precoding corresponding to the PS network and TTD network, respectively. According to the functional principle of TTD units, the phase shift generated by TTD has to be strictly proportional to frequency, written as $-2\pi f_m t$ where $t$ denotes the delay. Noticing that the optimal phase shifts required for the $k^{\rm{th}}$ TTD in equation~\eqref{eq: lemma3 eq1} have two separated components, written as
\begin{equation}
\label{eq: phase division}
\begin{aligned}
\angle [{\bf{p}}_{l,m}]_k^* &= -\frac{2\pi R}{c}\cos\left(\phi_l-\frac{\pi(2k+1)}{K}\right)f_c\\
&+\frac{2\pi R}{c}\cos\left(\phi_l-\frac{\pi(2k+1)}{K}\right)f_m,
\end{aligned}
\end{equation}
where the former is frequency-independent while the latter is exactly proportional to frequency $f_m$. Therefore, we can divide the required phase shifts into the frequency-independent component and frequency-dependent component, which can be realized by PS and TTD, respectively. Thus, the required time delay for the $k^{\rm{th}}$ TDD corresponding to the $l^{\rm{th}}$ channel component has to satisfy
\begin{equation}
\label{eq: TTD design}
\begin{aligned}
-2\pi f_m t_{l,k} = \frac{2\pi R}{c}\cos\left(\phi_l-\frac{\pi(2k+1)}{K}\right)f_m.
\end{aligned}
\end{equation}
\par Then, we can obtain $t_{l,k} = -\frac{R}{c}\cos\left(\phi_l-\frac{\pi(2k+1)}{K}\right)$. In addition, since the TTD has an additional constraint $t_l>0$, a global time delay needs to be added to all TTD units without influencing the beamforming performance. The modified delay could be written as
\begin{equation}
\label{eq: TTD design modify}
\begin{aligned}
{\widetilde{t}}_{l,k} = \frac{R}{c}\left(1-\cos\left(\phi_l-\frac{\pi(2k+1)}{K}\right)\right).
\end{aligned}
\end{equation}
Finally, the analog precoder corresponding to the TTD network is written as
\begin{equation}
\label{eq: TTD beamformer}
\begin{aligned}
{\widetilde{\bf{p}}}_{l,m} = [e^{-2\pi f_m {\widetilde{t}}_{l,1}},e^{-2\pi f_m {\widetilde{t}}_{l,2}},\cdots,e^{-2\pi f_m {\widetilde{t}}_{l,K}}]^T.
\end{aligned}
\end{equation}
\par So far, the frequency-dependent component of phase shifts has been obtained. After combining the frequency-independent component in~\eqref{eq: phase division} into the original analog precoder ${\bf{F}}_l$ in~\eqref{eq: phase shifter item}, the modified analog precoder corresponding to the PS network ${\widetilde{\bf{F}}_l}$ can be written as
\begin{equation}
\label{eq: ana ps 2}
\begin{aligned}
{\widetilde{\bf{f}}}_{l,k} = {\bf{a}}_c(\phi_l)_k e^{-\frac{2\pi R}{c}\cos\left(\phi_l-\frac{\pi(2k+1)}{K}\right)f_c},
\end{aligned}
\end{equation}
where ${\widetilde{\bf{f}}}_{l,k}$ denotes the $k^{\rm{th}}$ column of ${\widetilde{\bf{F}}_l}$. Finally, the analog precoder ${\bf{F}}_{{\rm{A}}}^{\rm{PS}}$ can be constructed by concatenating all analog precoders ${\widetilde{\bf{F}}_l}$.
\par The employed DPP algorithm is summarized in {\bf{Algorithm}~\ref{alg:1}}. Note that each RF chain is designed to cope with one channel path. We first rearrange the channel components in descending order, dealing with the most significant channel components with limited RF chains, as shown in line 1. Then, according to the separation method of the required phase shift in~\eqref{eq: phase division}, we can add phase shifts to classical beam steering vector to obtain each column of the PS-based analog precoder, as shown in lines 3, 5, and 8. By extracting the frequency-dependent components, we can obtain the analog precoder associated with TTD network in each RF chain, as shown in lines 6 and 9. Then, by concatenating the derived precoders for each RF chain, the analog precoder can be obtained in line 11-12. Finally, we can follow the classical water-filling procedure, using the equivalent channel and singular value decomposition (SVD) to obtain digital precoder, as shown in lines 13-14~\cite{Heath'15'j}. 

\begin{remark}
It is worth noting that, the acquirement of every channel component is a precondition of the proposed DPP algorithm, which is often adopted as a basic condition in existing works~\cite{Tan'22'twc,hanchong'22'jsac,chenzhi'23'open}. Generally, the channel state information could be effectively obtained through compressive sensing methods~\cite{Mojianhua'18'tsp} or deep learning methods~\cite{hanchong'21'tcom}.
\end{remark}
\begin{algorithm*}[!t] 
	\caption{DPP algorithm for UCA.} 
	\label{alg:1} 
	\begin{algorithmic}[1] 
		\REQUIRE ~
		Channel ${\bf{H}}_m$, number of antennas $N$, RF chains $N_{\rm{RF}}$, TTD units $K$, UCA radius $R$, frequency $f_m$ and angles $\phi_l$
		\ENSURE ~
        Analog precoder ${\bf{F}}_{\rm{A}}^{\rm{PS}}$ and ${\bf{F}}_{{\rm{A}},m}^{\rm{TTD}}$, digital precoder ${\bf{F}}_{{\rm{D}},m}$
		\STATE Rearrange the order of the channel components $|g_1| \geq |g_2| \geq \cdots \geq |g_{\rm{RF}}|$ and obtain corresponding $\{\phi_1,\cdots,\phi_{\rm{RF}}\}$;
		\FOR{$l = 1,2,\cdots,N_{\rm{RF}}$}
		\STATE Construct the classical beam steering vector ${\bf{a}}_c(\phi_l) = \frac{1}{\sqrt{N}}\left[e^{j\eta_c \cos(\phi_l-\psi_0)}, \cdots, e^{j\eta_c \cos(\phi_l-\psi_{N-1})}\right]^T$;
		\FOR{$k = 1,2,\cdots,K$}
		\STATE Append additional frequency-independent phase shifts ${\widetilde{\bf{f}}}_{l,k} = {\bf{a}}_c(\phi_l)_k e^{-j\frac{2\pi R}{c}\cos\left(\phi_l-\frac{\pi(2k+1)}{K}\right)f_c}$;
		\STATE Determine the required delay for $k^{\rm{th}}$ TTD ${\widetilde{t}}_{l,k} = \frac{R}{c}\left(1-\cos\left(\phi_l-\frac{\pi(2k+1)}{K}\right)\right)$;
		\ENDFOR
		\STATE Construct the PS-based analog precoder for the $l^{\rm{th}}$ RF chain ${\widetilde{\bf{F}}}_{l} = [{\widetilde{\bf{f}}}_{l,1},\cdots,{\widetilde{\bf{f}}}_{l,K}]$;
		\STATE Construct the TTD-based analog precoder for the $l^{\rm{th}}$ RF chain ${\widetilde{\bf{p}}}_{l,m} = [e^{-j2\pi f_m {\widetilde{t}}_{l,1}},e^{-j2\pi f_m {\widetilde{t}}_{l,2}},\cdots,e^{-j2\pi f_m {\widetilde{t}}_{l,K}}]^T$;
		\ENDFOR
		\STATE Concatenate PS-based analog precoders ${\bf{F}}_{\rm{A}}^{\rm{PS}} = \left[{\rm{blkdiag}}({\widetilde{\bf{F}}}_{1}),\cdots,{\rm{blkdiag}}({\widetilde{\bf{F}}}_{N_{\rm{RF}}})\right]$;
        \STATE Concatenate TTD-based analog precoders ${\bf{F}}_{{\rm{A}},m}^{\rm{TTD}} = {\rm{blkdiag}}\left(\left[{\widetilde{\bf{p}}}_{1,m},\cdots,{\widetilde{\bf{p}}}_{N_{\rm{RF}},m}\right]\right)$;
        \STATE Obtain the equivalent channel ${\bf{H}}_{{\rm{eq}},m} = {\bf{H}}_{m} {\bf{F}}_{\rm{A}}^{\rm{PS}} {\bf{F}}_{{\rm{A}},m}^{\rm{TTD}}$ with ${\bf{H}}_{{\rm{eq}},m} = {\bf{U}}_{{\rm{eq}},m} {\bf{\Sigma}}_{{\rm{eq}},m} {\bf{V}}_{{\rm{eq}},m}^H $;
		\STATE Determine the digital precoder ${\bf{F}}_{{\rm{D}},m} = {\bf{V}}_{{\rm{eq}},m} {\bf{\Lambda}}$;
		\RETURN ${\bf{F}}_{\rm{A}}^{\rm{PS}}$, ${\bf{F}}_{{\rm{A}},m}^{\rm{TTD}}$ and ${\bf{F}}_{{\rm{D}},m}$.
	\end{algorithmic}
\end{algorithm*}

\section{System Performance Analysis}\label{sec: sys per}
In this section, the system performance analysis on the averaged beamforming gain and spectrum efficiency will be provided.
\subsection{Averaged Beamforming Gain Performance}\label{sec: ave se}
In Section~\ref{sec: dpp}, the beamforming gain at different frequencies with DPP architecture is investigated. Nevertheless, the analysis is only restricted to a single frequency, failing to reveal the overall beamforming performance across the whole bandwidth. To this end, we define the averaged beamforming gain under the single-path assumption to quantify how serious the beam defocus effect is with hybrid precoding architectures, which can be expressed as
\begin{equation}
\label{eq: ave spe define}
\begin{aligned}
\kappa^{\rm{PS}} &= \frac{1}{B} \int_{f_c-B/2}^{f_c+B/2} \left|{\bf{a}}_m^H(\phi){\bf{a}}_c(\phi) \right| {\rm{d}}f_m \\
& \mathop{\approx}\limits^{(a)} \frac{1}{B} \int_{-B/2}^{B/2} \left|J_0\left(\frac{2\pi R}{c}f_m'\right)\right| {\rm{d}}f_m',
\end{aligned}
\end{equation}
where approximation (a) is derived by substituting the results in~\eqref{eq: gain wideband} and defining $f_m' = f_m-f_c$. Due to the difficulty in directly addressing the integral of the absolute Bessel function, we instead analyze its upper and lower bound. According to the Cauchy-schwartz inequality, the upper bound of~\eqref{eq: ave spe define} could be expressed as
\begin{equation}
\label{eq: ave spe upper}
\begin{aligned}
\kappa^{\rm{PS}} \leq \kappa_{\rm{U}}^{\rm{PS}} &= \frac{1}{B} \sqrt{B \int_{-B/2}^{B/2} J_0^2\left(\frac{2\pi R}{c}f_m'\right){\rm{d}}f_m'}\\
& = \sqrt{_2F_3\left(\frac{1}{2},\frac{1}{2};1,\frac{3}{2},\frac{3}{2};-\frac{b_{\rm{PS}}^2}{4}\right)},
\end{aligned}
\end{equation}
where $b_{\rm{PS}}=\frac{\pi BR}{c}$. The proof is similar to the process in {\bf{Appendix}~\ref{app: coro1}} and thus is omitted in this paper. Then, a simple lower bound could be obtained by integration without the absolute operator, which can be expressed as
\begin{equation}
\label{eq: ave spe lower}
\begin{aligned}
\kappa^{\rm{PS}} \geq \kappa_{\rm{L}}^{\rm{PS}} &= \frac{1}{B} \int_{-B/2}^{B/2} J_0\left(\frac{2\pi R}{c}f_m'\right) {\rm{d}}f_m'\\
& =~_1F_2\left(\frac{1}{2};1,\frac{3}{2};-\frac{b_{\rm{PS}}^2}{4}\right),
\end{aligned}
\end{equation}
where equation (b) could be obtained following equation~\eqref{eq: app3 5} in {\bf{Appendix}~\ref{app: coro1}}.
\par For the employed DPP architecture, the averaged beamforming gain can be reformulated as
\begin{equation}
\label{eq: ave spe TTD}
\begin{aligned}
\kappa^{\rm{TTD}} &= \frac{1}{B} \int_{-B/2}^{B/2} \left|_1F_2\left(\frac{1}{2};1,\frac{3}{2};-\frac{\pi^4 R^2}{c^2K^2 f_m'^2}\right) \right| {\rm{d}}f_m' \\
& =~_2F_3\left(\frac{1}{2},\frac{1}{2};1,\frac{3}{2},\frac{3}{2};-\frac{b_{\rm{TTD}}^2}{4}\right),
\end{aligned}
\end{equation}
where $b_{\rm{TTD}} =\frac{\pi^2 BR}{cK}$. Therefore, the beamforming gain improvement employing DPP is at least
\begin{equation}
\label{eq: ave spe improve}
\begin{aligned}
\Delta\kappa &= \frac{\kappa^{\rm{TTD}}}{\kappa_{\rm{U}}^{\rm{PS}}} = \frac{f(b_{\rm{TTD}})}{\sqrt{f(b_{\rm{PS}})}},
\end{aligned}
\end{equation}
where $f(x)$ is defined as $f(x) = _2F_3\left(\frac{1}{2},\frac{1}{2};1,\frac{3}{2},\frac{3}{2};-\frac{x^2}{4}\right)$. An illustration of $f(x)$ can be found as the blue line in Fig.~\ref{img: hyper function}. It can be seen that due to the monotone decreasing property of $f(x)$, a larger $K$ will contribute to an improved beamforming gain. 

\subsection{Spectrum Efficiency Analysis}\label{sec: sum rate}
\par In addition to the analysis with the single-path assumption, in this subsection the spectrum efficiency analysis is provided under a more general multi-path scenario. 
\par The spectrum efficiency at the $m^{\rm{th}}$ subcarrier can be expressed as
\begin{equation}
\label{eq: se multi}
\begin{aligned}
R_m = \log_2\left( \left|{\bf{I}} + \frac{\rho}{N_s\sigma_n^2} {\bf{H}}_m^H {\bf{F}}_{\rm{A}} {\bf{F}}_{{\rm{D}},m} {\bf{F}}_{{\rm{D}},m}^H {\bf{F}}_{\rm{A}}^H {\bf{H}}_m \right| \right),
\end{aligned}
\end{equation}
where ${\bf{F}}_{\rm{A}}$ can be further decomposed as ${\bf{F}}_{\rm{A}} = {\bf{F}}_{\rm{A}}^{\rm{PS}} {\bf{F}}_{{\rm{A}},m}^{\rm{TTD}}$. According to~\cite{Ayach'14'j}, when the number of RF chains is set to exceed the number of resolvable paths, the spatial multiplexing gain could be fully harvested. Then, by assuming SVD of the channel ${\bf{H}}_m^H = {\bf{U}}_m {\bf{\Sigma}}_m {\bf{V}}_m^H$ and extracting the significant sub-channels as ${\widetilde{{\bf{\Sigma}}}} = [{\bf{\Sigma}}]_{1:N_{\rm{s}},1:N_{\rm{s}}}$ and ${\widetilde{{\bf{V}}}}_m = [{\widetilde{{\bf{V}}}}_m]_{:,1:N_{\rm{s}}}$, the spectrum efficiency can be reformulated into 
\begin{equation}
\label{eq: se multi 2}
\begin{aligned}
R_m &= \log_2\left(\left|{\bf{I}} + \frac{\rho}{N_s\sigma_n^2} {\widetilde{{\bf{\Sigma}}}}_m {\widetilde{{\bf{V}}}}_m^H {\bf{F}}_{\rm{A}} {\bf{F}}_{{\rm{D}},m} {\bf{F}}_{{\rm{D}},m}^H {\bf{F}}_{\rm{A}}^H {\widetilde{{\bf{V}}}}_m {\widetilde{{\bf{\Sigma}}}}_m^H \right| \right)\\
& = \log_2\left(\left|{\bf{I}} + \frac{\rho}{N_s\sigma_n^2} {\widetilde{{\bf{\Sigma}}}}_m^2 {\bf{V}}_{m{\rm{,eq}}}^H {\bf{V}}_{m{\rm{,eq}}} \right| \right),
\end{aligned}
\end{equation}
where ${\bf{V}}_{m{\rm{,eq}}} = {\bf{F}}_{{\rm{D}},m}^H {\bf{F}}_{{\rm{A}},m}^{{\rm{TTD}}H} {\bf{F}}_{\rm{A}}^{{\rm{PS}}H}  {\widetilde{{\bf{V}}}}_m$. Adopting the linear transformation in~\cite{Ayach'14'j} which reformulates the unitary matrix ${\widetilde{{\bf{V}}}}_m$ with an orthogonal list of beam steering vectors, expressed as\footnote{Although this conclusion is originally derived in ULA systems, the asymptotic orthogonality of beam steering vectors could be extended into UCA systems, see~\cite{Wu'23'arxiv}.}
\begin{equation}
\label{eq: se multi 3}
\begin{aligned}
{\widetilde{{\bf{V}}}}_m \approx {\bf{A}}_{{\rm{t}},m} {\bf{F}}_{\rm{D,m}}^{\rm{opt}},
\end{aligned}
\end{equation}
where ${\bf{A}}_{{\rm{t}},m} = [{\bf{a}}_m(\phi_1),\cdots,{\bf{a}}_m(\phi_{N_{\rm{s}}})]$ and ${\bf{F}}_{\rm{D,m}}^{\rm{opt}}$ denote the optimal digital precoder which could maximize the spectrum efficiency~\cite{Ayach'14'j}. When the number of antennas tends to infinity, the steering vectors could form an orthogonal basis, revealing that ${\bf{A}}_{{\rm{t}},m}$ is a unitary matrix. Since ${\bf{A}}_{{\rm{t}},m}$ and ${\widetilde{{\bf{V}}}}_m$ are both unitary, when ${\bf{F}}_{\rm{A}}^{\rm{PS}} {\bf{F}}_{{\rm{A}},m}^{\rm{TTD}} = {\bf{A}}_{{\rm{t}},m}$ and ${\bf{F}}_{{\rm{D}},m} = {\bf{F}}_{\rm{D,m}}^{\rm{opt}}$ are satisfied, the maximum spectrum efficiency could be obtained as
\begin{equation}
\label{eq: se multi 4}
\begin{aligned}
R_m^{\rm{opt}} &= \log_2\left(\left|{\bf{I}} + \frac{\rho}{N_s\sigma_n^2} {\widetilde{{\bf{\Sigma}}}}_m^2 \right| \right).
\end{aligned}
\end{equation}
However, due to the beam defocus effect in UCA systems, ${\bf{F}}_{\rm{A}}^{\rm{PS}} {\bf{F}}_{{\rm{A}},m}^{\rm{TTD}} = {\bf{A}}_{{\rm{t}},m}$ could not be perfectly obtained. Therefore, the key factor influencing the spectrum efficiency lies in the analog beamforming gain, which can be expressed as
\begin{equation}
\label{eq: se multi 5}
\begin{aligned}
&~~~~\left({\bf{F}}_{{\rm{A}},m}^{{\rm{TTD}}H} {\bf{F}}_{\rm{A}}^{{\rm{PS}}H} {\bf{A}}_{{\rm{t}},m}\right)^H {\bf{F}}_{{\rm{A}},m}^{{\rm{TTD}}H} {\bf{F}}_{\rm{A}}^{{\rm{PS}}H} {\bf{A}}_{{\rm{t}},m} \\
&= {\rm{blkdiag}}\left(\left[G_m^2({\bf{b}}_{m,1} ,\phi_1),\cdots, G_m^2({\bf{b}}_{m,{N_{\rm{s}}}} ,\phi_{N_{\rm{s}}})\right]\right).
\end{aligned}
\end{equation}
In addition, note that the beamforming gain $G_m({\bf{b}}_{m,i} ,\phi_i)$ is independent of the angle. Therefore, the spectrum efficiency can be further expressed as 
\begin{equation}
\label{eq: se multi 6}
\begin{aligned}
R_m & \mathop{=} \limits^{(a)} \log_2\left( \left|{\bf{I}} + \frac{\rho}{N_s\sigma_n^2} G_m^2({\bf{b}}_{m,l} ,\phi_l) {\bf{\Sigma}}_m^2 \right| \right)\\
& \mathop{=} \limits^{(b)} \log_2\left( \left|{\bf{I}} + \frac{\rho}{N_s\sigma_n^2} ~_1F_2\left(\frac{1}{2};1,\frac{3}{2};-\frac{a^2}{4}\right)^2   {\bf{\Sigma}}_m^2 \right| \right),
\end{aligned}
\end{equation}
where equation (a) is obtained according to the independence of $G_m$ and $\phi_l$ and approximation (b) is derived from the conclusion in {\bf{Corollary}~\ref{coro1}}. The variable $a = \frac{2\pi^2 R}{cK}(f_m-f_c)$ is only determined by the system settings. With a larger $K$, the spectrum efficiency is expected to be enhanced compared with the classical hybrid precoding architecture.

\section{Simulation Results}\label{sec: sim}
\par In this section, simulation results are provided to validate the effectiveness of our theoretical analysis in previous sections. We consider a mmWave wideband communication system with the central frequency $f_c = 30$ GHz and bandwidth $B = 3$ GHz. A $256$-element UCA is equipped at BS to serve a single user equipped with a 4-element ULA. 
\par According to the analysis in {\bf{Corollary}~\ref{coro2}}, if we aim to ensure a beamforming loss less than $\Delta = 40\%$, the minimum required number of TTD units is about $8.2$. Since the number of TTD units has to be an integer, we employ $8$ TTDs in the simulation. The beamforming gain with and without TTD units is plotted in Fig.~\ref{img: beam dpp}. It can be seen from Fig.~\ref{img: beam dpp} that the introduction of TTD has significantly improved the beamforming performance compared with the classical PS-based hybrid precoding architectures. The beamforming gain has approximately exceeded the predetermined beamforming gain threshold of 0.6 over the whole bandwidth, indicating the effectiveness of the results in {\bf{Corollary}~2}. In addition, the green dashed line has perfectly covered the blue and black lines, revealing that the approximation in~\eqref{eq: lemma3 eq2} and~\eqref{eq: coro1 eq1} have achieved high accuracy.
\begin{figure}[!t]
    \centering
    \setlength{\abovecaptionskip}{0.cm}
    \includegraphics[width=3in]{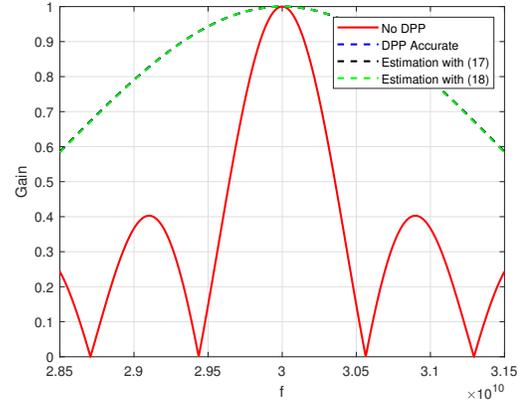}
    \caption{Illustration of the beamforming gain with DPP architectures.}
    \label{img: beam dpp}
\end{figure}
\par To show the superiority of DPP architecture in improving the averaged beamforming gain, the comparisons on averaged beamforming gain between different architectures are plotted in Fig.~\ref{img: averaged beamforming}. In the simulation, $K=8$ is assumed and other systems settings remain the same as in Fig.~\ref{img: beam dpp}. It shows that the DPP architecture obtains the same performance as classical hybrid precoding architectures with a narrow bandwidth. As the bandwidth scales up, the employed DPP architecture always outperforms the classical hybrid precoding scheme. For a bandwidth $B=2$ GHz, the DPP architecture could improve the averaged beamforming gain by about $95\%$. The upper bound and lower bound of the averaged beamforming gain are verified to be effective with the simulation results.
\begin{figure}[!t]
    \centering
    \setlength{\abovecaptionskip}{0.cm}
    \includegraphics[width=3in]{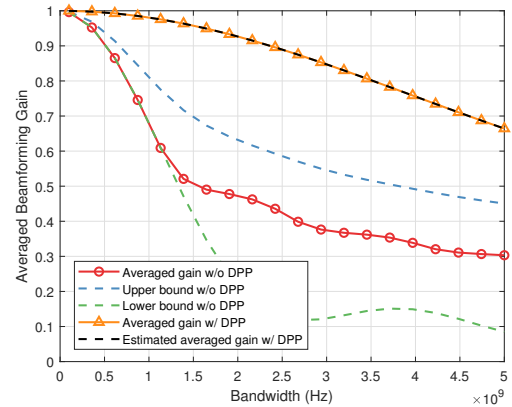}
    \caption{Comparison on the averaged beamforming gain.}
    \label{img: averaged beamforming}
\end{figure}
\par Then, the comparison of the spectrum efficiency over different SNRs is plotted in Fig.~\ref{img: spe eff SNR}. The baselines include the optimal fully-digital precoding, wideband optimization-based method~\cite{Heath'17'twc}, and spatially sparse precoding~\cite{Ayach'14'j}. It can be seen that the DPP architecture outperforms the optimization-based method and spatially sparse precoding at different SNRs. The reason lies in that the analog precoder in the optimization-baed method is only designed to achieve a balanced beamforming gain across the bandwidth, which could not obtain an ideal beamforming gain at different frequencies. Instead, the DPP architecture is able to generate frequency-dependent phase shifts with the aid of TTD, which is expected to obtain an ideal beamforming gain at any frequency. Compared with the optimization-based method in~\cite{Heath'17'twc}, the performance degradation of spatially sparse precoding comes from the non-orthogonality of selected steering vectors, since there are no common methods to form an orthogonal basis consisting of different steering vectors in UCA. 

\begin{figure}[!t]
    \centering
    \setlength{\abovecaptionskip}{0.cm}
    \includegraphics[width=3in]{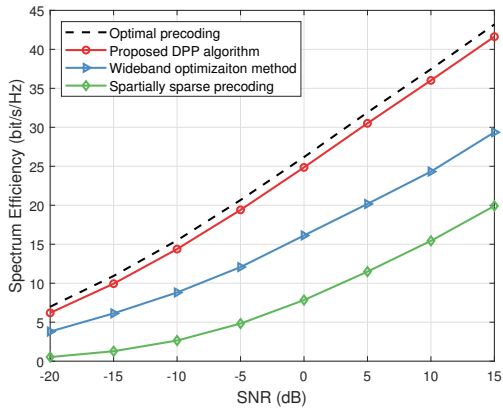}
    \caption{Comparison on the spectrum efficiency over different SNRs.}
    \label{img: spe eff SNR}
\end{figure}

\par To demonstrate the relationship between the spectrum efficiency performance and the number of TTDs employed, the system performance versus the number of TTD units is plotted in Fig.~\ref{img: spe eff TTD}. It can be seen that, if only one TTD is employed, the performance of the DPP architecture is comparable to classical hybrid precoding architectures. The performance of DPP improves as the number of TTDs scales up. When the number of TTDs satisfies $K \geq 8$, the spectrum efficiency of DPP has exceeded $90\%$ of the optimal spectrum efficiency. 

\begin{figure}[!t]
    \centering
    \setlength{\abovecaptionskip}{0.cm}
    \includegraphics[width=3in]{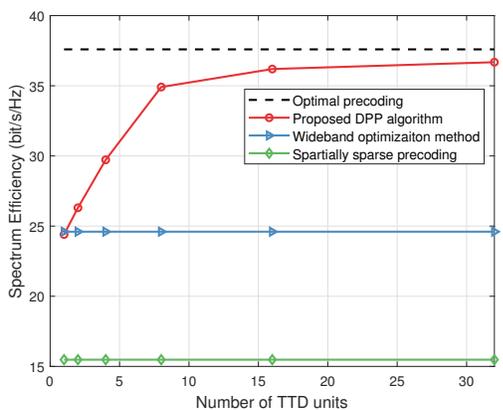}
    \caption{Comparison on the spectrum efficiency over different number of TTD units.}
    \label{img: spe eff TTD}
\end{figure}

\par To show the influence of the bandwidth, the spectrum efficiency performance versus the bandwidth from $100$ MHz to $5$ GHz is plotted in Fig.~\ref{img: spe eff bw}. In this simulation, the number of TTD units is set to $K=16$. With a narrow bandwidth, the optimization-based method could achieve near-optimal performance. Nevertheless, the performance of the optimization-based method will degrade as the bandwidth increases, indicating the influence of the spatial wideband effect on the spectrum efficiency in classical hybrid precoding architectures. On the contrary, the DPP method could always obtain a relatively stable performance for different bandwidths. The performance of DPP architecture begins to decrease when bandwidth exceeds $4$ GHz, which indicates that more TTD units are required.

\begin{figure}[!t]
    \centering
    \setlength{\abovecaptionskip}{0.cm}
    \includegraphics[width=3in]{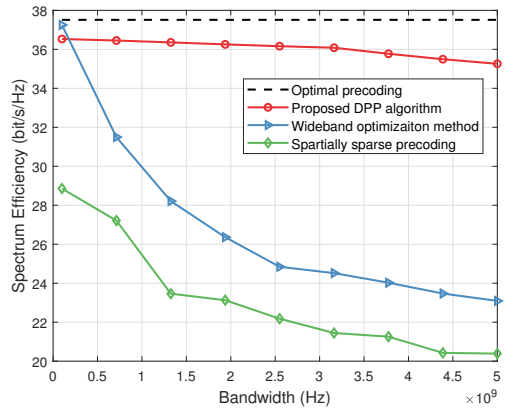}
    \caption{Comparison on the spectrum efficiency over different bandwidths.}
    \label{img: spe eff bw}
\end{figure}

\section{Conclusion}\label{sec: con}
In this paper, the mechanism of the beam defocus effect in UCA systems is investigated for the first time. The dispersed beam pattern in the angular and frequency domain is characterized, revealing that the hybrid precoding architecture may face significant beamforming loss in UCA wideband communications. To retrieve an ideal beamforming gain, the DPP architecture and corresponding precoding algorithm are introduced to mitigate the beam defocus effect in UCA systems. Theoretical analysis and simulation results are provided to verify the effectiveness of the proposed method. The influence of the finite resolution requirement of TTD in UCA wideband communications is left for future research.

\appendices
\section{Proof of Lemma~\ref{lemma2}}\label{app: lemma2}
The proof of {\bf{Lemma 2}} also utilizes the generating function of Bessel functions. First, the equation~\eqref{eq: gain wideband angular} could be written as
\begin{equation}
\label{eq: app1 1}
\begin{aligned}
G_m({\bf{a}}_c(\phi_0),\phi) &= \left| {\bf{a}}_m^H(\phi) {\bf{a}}_c(\phi_0) \right| \\
& = \left|\frac{1}{N} \sum_{n=0}^{N-1} e^{j \left[\eta_m\cos(\phi-\psi_n)-\eta_c\cos(\phi_0-\psi_n)\right]} \right|.
\end{aligned}
\end{equation}
By substituting the generating function of Bessel functions in~\eqref{eq: lemma1 eq1}, we can obtain
\begin{equation}
\label{eq: app1 3}
\begin{aligned}
G_m({\bf{a}}_c(\phi_0),\phi) &= \frac{1}{N} \left| \sum_{n=0}^{N-1} \left[ \sum_{s_1=-\infty}^{+\infty} j^{s_1} J_{s_1}(\eta_m)e^{js_1(\phi-\psi_n)} \right] \right.\\
& \left. \times \left[\sum_{s_2=-\infty}^{+\infty} j^{s_2} J_{s_2}(-\eta_c)e^{js_2(\phi_0-\psi_n)} \right] \right| \\
& \mathop{=}\limits^{(a)} \frac{1}{N} \left| \sum_{s_1=-\infty}^{+\infty} \sum_{s_2=-\infty}^{+\infty} j^{s_1+s_2} J_{s_1}(\eta_m)J_{s_2}(-\eta_c) \right. \\
& ~~~~ \left. \times e^{j(s_1\phi+s_2\phi_0)} \sum_{n=0}^{N-1} e^{-j(s_1+s_2)\psi_n} \right|,
\end{aligned}
\end{equation}
where equation (a) is obtained by exchanging the order of the three summations. When replacing $\psi_n$ with $\frac{2 \pi n}{N}$, the last summation over $n$ could be expressed as the piecewise function similar to~\eqref{eq: lemma1 eq4} as
\begin{equation}
\label{eq: app1 4}
\sum_{n=0}^{N-1} e^{-j(s_1+s_2)\psi_n}=\left\{
\begin{aligned}
N, \quad &s_1+s_2 = N \cdot t, t \in \mathbb{Z} \\
0, \quad &s_1+s_2 \neq N \cdot t, t \in \mathbb{Z}.\\
\end{aligned}
\right.
\end{equation}
Like the proof process of {\bf{Lemma}~1}, the conditions of $s_1+s_2 \neq 0$ could be omitted in the summation when assuming a large $N$. Substituting $s_1+s_2=0$, the beamforming gain could be simplified as
\begin{equation}
\label{eq: app1 5}
\begin{aligned}
G_m({\bf{a}}_c(\phi_0),\phi) &= \left| \sum_{s=-\infty}^{+\infty} J_{s}(\eta_m)J_{-s}(-\eta_c) e^{js(\phi-\phi_0)} \right| \\
& \mathop{=}\limits^{(b)} \left| \sum_{s=-\infty}^{+\infty} J_{s}(\eta_m)J_{s}(\eta_c) e^{js(\phi-\phi_0)} \right|,
\end{aligned}
\end{equation}
where equation (b) is derived with the property $J_s(-x) = (-1)^sJ_s(x)$ and $J_{-s}(x) = (-1)^sJ_s(x)$~\cite{abramowitz1948handbook}. In addition, according to the Addition Theorems of Bessel functions~\cite{abramowitz1948handbook}
\begin{equation}
\label{eq: app1 6}
\begin{aligned}
J_0(r_0) = \sum_{s=-\infty}^{+\infty} J_s(r_1) J_s(r_2) e^{js\theta},
\end{aligned}
\end{equation}
where $r_0 = \sqrt{r_1^2+r_2^2-2r_1r_2\cos\theta}$, the beamforming gain could be rewritten in a very concise expression as
\begin{equation}
\label{eq: app1 7}
\begin{aligned}
G_m({\bf{a}}_c(\phi_0),\phi) = |J_0(\xi)|,
\end{aligned}
\end{equation}
where $\xi = \sqrt{\eta_m^2+\eta_c^2-2\eta_m\eta_c\cos(\phi-\phi_0)}$. This completes the proof.

\section{Proof of Lemma~\ref{lemma3}}\label{app: lemma3}
With the already designed PSs ${\bf{a}}_c^{(t)}(\phi)$, the $n^{\rm{th}}$ element of the product of beamforming vector and steering vector at the $m^{\rm{th}}$ subcarrier is written as
\begin{equation}
\label{eq: app2 1}
\begin{aligned}
[{\bf{a}}_m^{H}(\phi){\bf{a}}_c(\phi)]_n = e^{-j \frac{2\pi R}{c}(f_c-f_m)\cos(\phi-\frac{2\pi n}{N})},
\end{aligned}
\end{equation}
where $n = 0,1,\cdots,N-1$. Noticing that the frequency-dependent phase shifts are not needed at central frequency, ${\bf{p}}_{l,m}$ should be equal to ${\bf{0}}$ when $f_m=f_c$. In addition, when each antenna is equipped with an individual TTD unit, no beam defocus will be generated. Therefore, the corresponding value of ${\bf{p}}_{l,m}$ should be written as
\begin{equation}
\label{eq: app2 2}
\begin{aligned}
{\bf{p}}_{l,m} = [e^{j \frac{2\pi R}{c}(f_c-f_m)\cos\phi}, \cdots, e^{j \frac{2\pi R}{c}(f_c-f_m)\cos(\phi-\frac{2\pi(N-1)}{N})}].
\end{aligned}
\end{equation}
Following this intuition, the $k^{\rm{th}}$ element of ${\bf{p}}_{l,m}$ for connecting to $K$ TTDs should also be expressed as $e^{j \frac{2\pi R}{c}(f_c-f_m)\cos(\phi-\bar{\theta}_k)}$, where $\bar{\theta}_k$ denotes the specific phase shift corresponding to the $k^{\rm{th}}$ subarray.
\par Then, the beamforming gain with DPP architectures could be reformulated as
\begin{equation}
\label{eq: app2 3}
\begin{aligned}
G_m({\bf{b}}_{m,l}, \phi) &= \frac{1}{N} \sum_{k=0}^{K-1} \left[\sum_{i=kP}^{(k+1)P-1} e^{-j\frac{2\pi R}{c}(f_c-f_m)\cos(\phi-\frac{2\pi i}{N})}\right.\\
 &~~~~ \times \left. e^{j\frac{2\pi R}{c}(f_c-f_m)\cos(\phi-\bar{\theta}_k)} \right].
\end{aligned}
\end{equation}
Then, again with the generating function in~\eqref{eq: lemma1 eq1}, we can rewrite the inner summation over $i$ as
\begin{equation}
\label{eq: app2 4}
\begin{aligned}
&~~~\sum_{i=kP}^{(k+1)P-1} e^{-j\frac{2\pi R}{c}(f_c-f_m)\cos(\phi-\frac{2\pi i}{N})}\\
& \mathop{=}\limits^{(a)} \sum_{s_1=-\infty}^{+\infty} j^{s_1} J_{s_1}\left(-\frac{2\pi R}{c}(f_c-f_m)\right)e^{js_1\phi} \sum_{i=kP}^{(k+1)P-1} e^{-js_1 \frac{2\pi i}{N}},
\end{aligned}
\end{equation}
where the equation (a) is obtained by substituting the generating function. Similarly, the outer summation over $k$ could be written as
\begin{equation}
\label{eq: app2 5}
\begin{aligned}
&~~~\sum_{k=0}^{K-1} e^{j\frac{2\pi R}{c}(f_c-f_m)\cos(\phi-\bar{\theta}_k)} \sum_{i=kP}^{(k+1)P-1} e^{-js_1 \frac{2\pi i}{N}}\\
& = \sum_{s_2=-\infty}^{+\infty} j^{s_2} J_{s_2}\left(\frac{2\pi R}{c}(f_c-f_m)\right)e^{js_2\phi} \\
&~~~ \times \sum_{k=0}^{K-1} e^{-js_2 {\bar{\theta}}_k} \sum_{i=kP}^{(k+1)P-1} e^{-js_1 \frac{2\pi i}{N}}\\
& \mathop{=}\limits^{(b)} \sum_{s_2=-\infty}^{+\infty} j^{s_2} J_{s_2}\left(\frac{2\pi R}{c}(f_c-f_m)\right)e^{js_2\phi} \frac{1-e^{-js_1 2\pi/K}}{1-e^{-js_1 2\pi/N}}\\
&~~~ \times e^{-js_2 \frac{\xi}{K}} \sum_{k=0}^{K-1} e^{-j (s_1+s_2) \frac{2\pi k}{K}},
\end{aligned}
\end{equation}
where the equation (b) is derived by assuming ${\bar{\theta}}_k = \frac{2\pi k}{K} + \frac{\xi}{K}$. Then we can adopt a similar process as the proof of {\bf{Lemma}~1.} by setting $s_1+s_2=0$ assuming a relatively large $K$. The beamforming gain could be finally simplified as
\begin{equation}
\label{eq: app2 6}
\begin{aligned}
&~~~~G_m({\bf{b}}_{m,l}, \phi) \\ 
&= \frac{1}{P} \sum_{s=-\infty}^{+\infty} J_s^2\left(\frac{2\pi R}{c}(f_c-f_m)\right) \sum_{i=0}^{P-1} e^{-js \frac{2\pi i - P\xi}{N}}\\
&= \frac{1}{P} \sum_{i=0}^{P-1} \sum_{s=-\infty}^{+\infty} J_s^2\left(\frac{2\pi R}{c}(f_c-f_m)\right) e^{-js \frac{2\pi i - P\xi}{N}}\\
&\mathop{=}\limits^{(c)} \frac{1}{P} \sum_{i=0}^{P-1} J_0(R_i),
\end{aligned}
\end{equation}
where the equation (c) is derived according to the Addition Theorems of Bessel functions~\cite{abramowitz1948handbook} and $R_i = \frac{2\sqrt{2}\pi R}{c}(f_c-f_m)\sqrt{1-\cos\left(\frac{(2i+1)\pi}{N}-\frac{\pi}{K}\right)}$. Finally, the summation could be approximately maximized when $\xi = \pi-\frac{\pi}{P}$, which makes $R_i$ as small as possible. This completes the proof.

\section{Proof of Corollary~\ref{coro1}}\label{app: coro1}
We first denote $\zeta_i = \frac{(2i+1)\pi}{N}-\frac{\pi}{K}$ for $i=0,1,\cdots,P-1$. As a result, $\zeta_i$ satisfies $\zeta_i \in \left[-\frac{\pi}{K}+\frac{\pi}{N},\frac{\pi}{K}-\frac{\pi}{N} \right]$. Assuming a large $K$, $\zeta_i$ can be viewed to around $0$. With the Taylor series expansion $\cos(x) = 1-\frac{1}{2}x^2+\mathcal{O}(x^2)$, $R_i$ in~\eqref{eq: lemma3 eq2} could be simplified as
\begin{equation}
\label{eq: app3 1}
\begin{aligned}
R_i &= \frac{2\sqrt{2}\pi R}{c}(f_c-f_m)\sqrt{1-\cos\zeta_i}\\
&\mathop{\approx}\limits^{(a)} \frac{2\pi R}{c}(f_c-f_m) \zeta_i,
\end{aligned}
\end{equation}
where $i=0,1,\cdots,P-1$. Then, the beamforming gain could be rewritten as
\begin{equation}
\label{eq: app3 2}
\begin{aligned}
G_m({\bf{b}}_{m,l}, \phi) &= \frac{1}{P} \sum_{i=0}^{P-1} J_0\left(\frac{2\pi R}{c}(f_c-f_m) \zeta_i\right)\\
& \mathop{\approx}\limits^{(b)} \frac{N}{2\pi P} \int_{-\pi/K}^{\pi/K} J_0\left(\frac{2\pi R}{c}(f_c-f_m) \zeta\right) {\rm{d}}\zeta\\
& \mathop{=}\limits^{(c)} \frac{N}{\pi P} \int_{0}^{\pi/K} J_0\left(\frac{2\pi R}{c}(f_c-f_m) \zeta\right) {\rm{d}}\zeta,
\end{aligned}
\end{equation}
where approximation (b) is obtained by replacing summation with integral over $\zeta$. Equation (c) is derived according to the parity of $J_0(\cdot)$. By substituting $\zeta'=\frac{2\pi R}{c}(f_c-f_m) \zeta$ and $a = \frac{\pi}{K}\cdot\frac{2\pi R}{c}(f_c-f_m)$, the above equation could be further simplified into
\begin{equation}
\label{eq: app3 3}
\begin{aligned}
G_m({\bf{b}}_{m,l}, \phi) &\approx \frac{1}{a} \int_{0}^{a} J_0\left(\zeta'\right) {\rm{d}}\zeta'\\
&\mathop{=}\limits^{(d)}~_1F_2(\frac{1}{2};1,\frac{3}{2};-\frac{a^2}{4}),
\end{aligned}
\end{equation}
where the proof of equation (d) shall be shown as follows. According to the definition, $\int_0^a J_0(x) {\rm{d}}x$ could be expressed as
\begin{equation}
\label{eq: app3 4}
\begin{aligned}
\frac{1}{a} \int_0^a J_0(x) {\rm{d}}x &= \frac{1}{a} \sum_{n=0}^{\infty} (-1)^n \frac{1}{2^{2n}\Gamma^2(n+1)} \int_0^a x^{2n} {\rm{d}}x\\
& = \sum_{n=0}^{\infty} (-1)^n \frac{1}{(2n+1)2^{2n}\Gamma^2(n+1)} a^{2n}.
\end{aligned}
\end{equation}
While according to the definition of generalized hypergeometric function and the notation $(a)_n = a(a+1)\cdots(a+n-1)$ for $n\geq1$, $_1F_2(\frac{1}{2};1,\frac{3}{2};-\frac{a^2}{4})$ could be expressed 
\begin{equation}
\label{eq: app3 5}
\begin{aligned}
_1F_2(\frac{1}{2};1,\frac{3}{2};-\frac{a^2}{4}) &= \sum_{n=0}^{\infty} \frac{(\frac{1}{2})_n}{(1)_n (\frac{3}{2})_n} \cdot (-1)^n \frac{a^{2n}}{2^{2n} n!}\\
& = \sum_{n=0}^{\infty} (-1)^n \frac{a^{2n}}{2^{2n}(2n+1)n!n!}\\
& = \frac{1}{a} \int_0^a J_0(x) {\rm{d}}x,
\end{aligned}
\end{equation}
which completes the proof.

\footnotesize
\balance 
\bibliographystyle{IEEEtran}
\bibliography{IEEEabrv,reference}
\end{document}